\appto\appendix{\addtocontents{toc}{\protect\setcounter{tocdepth}{0}}}
\renewcommand{\tocsection}[3]{%
	\indentlabel{\@ifnotempty{#2}{\bfseries\ignorespaces#1 #2\quad}}\bfseries#3}
\renewcommand{\tocsubsection}[3]{%
	\indentlabel{\@ifnotempty{#2}{\ignorespaces#1 #2\quad}}#3}
\newcommand\@dotsep{4.5}
\def\@tocline#1#2#3#4#5#6#7{\relax
	\ifnum #1>\c@tocdepth % then omit
	\else
	\par \addpenalty\@secpenalty\addvspace{#2}%
	\begingroup \hyphenpenalty\@M
	\@ifempty{#4}{%
		\@tempdima\csname r@tocindent\number#1\endcsname\relax
	}{%
		\@tempdima#4\relax
	}%
	\parindent\z@ \leftskip#3\relax \advance\leftskip\@tempdima\relax
	\rightskip\@pnumwidth plus1em \parfillskip-\@pnumwidth
	#5\leavevmode\hskip-\@tempdima{#6}\nobreak
	\leaders\hbox{$\m@th\mkern \@dotsep mu\hbox{.}\mkern \@dotsep mu$}\hfill
	\nobreak
	\hbox to\@pnumwidth{\@tocpagenum{\ifnum#1=1\bfseries\fi#7}}\par% <-- \bfseries for \section page
	\nobreak
	\endgroup
	\fi}
\renewcommand\csname r@tocindent0\endcsname{0pt}
\def\l@subsection{\@tocline{2}{0pt}{2.5pc}{5pc}{}}
\renewcommand\section{\@startsection {section}{1}{\z@}%
	{-3.5ex \@plus -1ex \@minus -.2ex}%
	{2.3ex \@plus.2ex}%
	{\normalfont\@secnumfont\fontsize{15}{18}\bfseries}}
\renewcommand\subsection{\@startsection{subsection}{2}{\z@}%
	{-3.25ex\@plus -1ex \@minus -.2ex}%
	{1.5ex \@plus .2ex}%
	{\normalfont\fontsize{13.5}{17}\selectfont}}
\renewcommand\subsubsection{\@startsection{subsubsection}{3}{\z@}%
	{-3.25ex\@plus -1ex \@minus -.2ex}%
	{1.5ex \@plus .2ex}%
	{\normalfont\normalsize\fontsize{12.5}{17}\selectfont}}
\def\@seccntformat#1{%
	\protect\textup{\protect\@secnumfont
%		\ifnum\pdfstrcmp{subsection}{#1}=0 \bfseries\fi% subsection # in \bfseries
		\ifnum\pdfstrcmp{section}{#1}=0 \bfseries\fi% subsection # in \bfseries
		\csname the#1\endcsname
		\protect\@secnumpunct
	}%
}  
\numberwithin{equation}{section}
\theoremstyle{plain}
\newtheorem{defin}{Definition}
\newcommand{\bc}{\boldsymbol}
\newcommand{\mc}{\mathcal}
\def\blfootnote{\gdef\@thefnmark{}\@footnotetext}
\author{Taeho Kim$^\star$ and Edsel A. Pe$\tilde{\text{n}}$a$^\dagger$}
\title{\Large{Improved Multiple Confidence Intervals via \\Thresholding Informed by Prior Information}}
\address{\begin{tabular}{cc}Department of Statistics&Department of Statistics\\University of Haifa& University of South Carolina\\Haifa, 31905. Israel& Columbia, SC 29208. U.S.A.\end{tabular}}
\email{ktaeho@campus.haifa.ac.il; pena@stat.sc.edu}
\let\origmaketitle\maketitle
\def\maketitle{
	\begingroup
	\def\uppercasenonmath##1{} % this disables uppercasing title
	\let\MakeUppercase\relax % this disables uppercasing authors
	\origmaketitle
	\endgroup
}
\begin{document}

\maketitle

\blfootnote{\textup{$^\star$T. Kim is a Postdoctoral Researcher,} Department of Statistics, University of Haifa. Israel.}
\blfootnote{\textup{$^\dagger$E. A. Pe$\tilde{\text{n}}$a is a Professor,} Department of Statistics, University of South Carolina, Columbia, SC. USA.}

\begin{abstract}
Consider a statistical problem where a set of parameters are of interest to a researcher. 
Then multiple confidence intervals can be constructed to infer the set of parameters simultaneously. 
The constructed multiple confidence intervals are the realization of a multiple interval estimator (MIE), the main focus of this study. 
In particular, a thresholding approach is introduced to improve the performance of the MIE. 
The developed thresholds require additional information, so a prior distribution is assumed for this purpose. 
The MIE procedure is then evaluated by two performance measures: a global coverage probability and a global expected content, which are averages with respect to the prior distribution. 
The procedure defined by the performance measures will be called a Bayes MIE with thresholding (BMIE Thres). In this study, a normal-normal model is utilized to build up the BMIE Thres for a set of location parameters. Then, analytic behaviors of the BMIE Thres are investigated in terms of the performance measures, which approach those of the corresponding $z$-based MIE as the thresholding parameter, $C$, goes to infinity. In addition, an optimization procedure is introduced to achieve the best thresholding parameter $C$. For illustration purposes, in-season baseball batting average data and leukemia gene expression data are used to demonstrate the procedure for the known and unknown standard deviations situations, respectively. In the ensuing simulations, the target parameters are generated from different true generating distributions in order to consider the misspecified prior situation. The simulation also involves (empirical) Bayes credible MIE, and the effectiveness among the different MIEs are compared with respect to the performance measures. In general, the thresholding procedure helps to achieve a meaningful reduction in the global expected content while maintaining a nominal level of the global coverage probability.\\
\vspace{3pt}

\noindent\textbf{\textit{2010 AMS subject classification:}} Primary: 62F25; Secondary: 62H12, 62H15.
\vspace{2pt}

\noindent\textbf{\textit{Keywords and phrases:}} Family-wise Coverage Rate, Family-wise Error Rate, Multiple Interval Estimator, Multiple Confidence Intervals, Multiple Testing, Prior Information.

\end{abstract}

\newpage

\tableofcontents

%\doublespacing

\section{Introduction}
Suppose an interval estimator (IE) is constructed for a single parameter, $\theta_0$. With a given level of $1-\alpha$, the IE is essentially a set-valued measurable mapping, $\Gamma(\cdot;\theta_0,\alpha)$, from a sample space $\mc{X}$ to the sigma field of a parameter space $\Theta$. Note that a family of probability distributions, $\mc{P}=\{P_\theta:\theta\in\Theta\}$, on the sample space is postulated so that it forms a statistical model. Once the IE is constructed, then the precision and accuracy of the estimation can be evaluated through expected length (EL) and coverage probability (CP): 
\begin{align}\label{eq11}
EL[\theta_0,\alpha]=E_{\theta_0}[\nu(\Gamma(X;\theta_0,\alpha))]\;\&\;CP[\theta_0,\alpha]=P_{\theta_0}[\theta_0\in\Gamma(X;\theta_0,\alpha)]
\end{align}
where $\nu$ measures the length of the IE. Based on these performance measures, the usual optimality condition for an IE is to minimize EL while maintaining CP at a given level of at least $1-\alpha$.

Inverting a test function is a common approach to constructing an IE. This exploits the concept of duality, the following correspondence between interval estimation and hypothesis testing. Given $\alpha\in(0,1)$, a test function of $\alpha$ size for $H_0:\;\theta=\theta_0$ vs. $H_A:\;\theta\ne\theta_0$ is a mapping $\delta(\cdot;\theta_0,\alpha):\;\mathcal{X}\to\{0,1\}$ such that $P_{\theta_0}[\delta(X;\theta_0,\alpha)=1]\leq\alpha$. Then, the corresponding IE is 
\begin{align}\label{eq12}
\Gamma(x;\theta_0,\alpha)=\{\theta\in\Theta:\;\delta(x;\theta_0,\alpha)=0\}.
\end{align}
Conversely, suppose $\Gamma(\cdot;\theta_0,\alpha)$ is an interval estimator of $1-\alpha$ level for $\theta_0$ such that $P_{\theta_0}[\theta_0\in\Gamma(X;\theta_0,\alpha)]\geq 1-\alpha$. Then the corresponding test function for $H_0:\;\theta=\theta_0$ vs. $H_A:\;\theta\ne\theta_0$ is
\begin{align}\label{eq13}
	\delta(x;\theta_0,\alpha)=I\{\theta\notin \Gamma(x;\theta_0,\alpha)\}.
\end{align}
From the constructions, it is clear that the CP of the resulting IE in (\ref{eq12}) is at least $1-\alpha$ and the type-I error rate of the resulting test function in (\ref{eq13}) is at most $\alpha$, respectively.
A similar logic holds true for the case of multiple parameters: $\bc{\theta_0}=\left(\theta_1^0,\theta_2^0,\ldots,\theta_M^0\right)$. That is, a duality exists between a multiple testing procedure (MTP) and a multiple interval estimator (MIE), so that the structure of MTP can be transferred to that of corresponding MIE, and vice versa. From the perspective of developing MIEs, this is particularly useful to derive a global coverage probability from an existing global type-I error rate which has been extensively studied in the field of multiple testing.

The probability of committing at least one type-I error is called the family-wise error rate (FWER), and it is one of the most well-established global type-I error rates (\citet{hochberg:1987}). Suppose $(\bc{\mc{X}},\bc{\mc{F}},\bc{\mc{P}})=\left(\bigotimes_{m=1}^M\mc{X}_m,\sigma\left(\bigotimes_{m=1}^M\mathcal{F}_m\right),\prod_{m=1}^M\mc{P}_m\right)$ is a statistical model and $(\bc{\Theta,\mc{T}})=\left(\bigotimes_{m=1}^M\Theta_m,\sigma\left(\bigotimes_{m=1}^M\mc{T}_m\right) \right)$ is a product-measurable space. In this study, we assume independence of random quantities throughout the index $m$. Then an MTP of global size $q$ for $H^0_m:\;\theta_m=\theta_m^0$ vs. $H^A_m:\;\theta_m\ne\theta_m^0$ for $m=1,2,\ldots,M$ is a mapping $\bc{\delta(x;\theta_0,\alpha)}=\left(\delta_m(x_m;\theta_m^0,\alpha_m)\right)_{m=1}^M$ such that 
\begin{align}\label{eq14}
\text{FWER}\left[\bc{\theta_0,\alpha}\right]:=\bc{P_{\theta_0}}\left[\bigcup_{m=1}^M\left\{\delta_m\left(X_m;\theta_m^0,\alpha_m\right)=1\right\}\right]\leq q.
\end{align}
Note that this is the case where an FWER is weakly controlled; whereas, if the inequality holds with the probability evaluated by any possible combinations of the null and alternative hypotheses, then an FWER is strongly controlled. Now we utilize the FWER to derive a particular global coverage probability, called the family-wise coverage rate (FWCR). By exploiting the duality, an MIE is constructed as follows: 
\begin{align}\label{eq15}
\bc{\Gamma(x;\theta_0,\alpha)}=\bigtimes_{m=1}^M\Gamma_m\left(x_m;\theta_m^0,\alpha_m\right)=\bigtimes_{m=1}^M\left\{\theta_m^0\in\Theta_m:\delta_m\left(x_m;\theta_m^0,\alpha_m\right)=0\right\}.
\end{align}
Then FWCR is defined as follows:
\begin{align}\label{eq16}
\text{FWCR}[\bc{\theta_0,\alpha}]:=&\bc{P_{\theta_0}}\left[\bigcap_{m=1}^M\left\{\theta_m^0\in\Gamma_m\left(X_m;\theta_m^0,\alpha_m\right)\right\}\right]  \\
=&1-\bc{P_{\theta_0}}\left[\bigcup_{m=1}^M\left\{\theta_m^0\notin\Gamma_m\left(X_m;\theta_m^0,\alpha_m\right)\right\}\right] \nonumber \\
=&1-\bc{P_{\theta_0}}\left[\bigcup_{m=1}^M\left\{\delta_m\left(X_m;\theta_m^0,\alpha_m\right)=1\right\}\right]=1-\text{FWER}[\bc{\theta_0,\alpha}]\geq 1-q \nonumber
\end{align}
Note that the duality ensures the MIE maintains the FWCR at a global level of at least $1-q$. Given this condition, we seek an MIE which minimizes the global expected content represented by the average of the expected lengths. 
 
Now, to motivate a thresholding approach, let us focus on an MIE with two-sided individual IEs. If it is possible to determine that a subset of the target parameters resides on one side of the corresponding IEs, then the other side of the IEs can be removed in order to minimize the global expected content of the MIE. In order to implement the removal process, the MIE will be equipped with a pair of thresholds. However, setting up the thresholds requires additional information such as knowledge from the domain science, results from previous experiments, and/or common sense. In this study, we adopt prior information regarding the target parameters to fulfill the requirement. This will be modeled by setting a prior distribution, $\bc{\Pi}=\prod_{m=1}^M\Pi_m$, on the parameter space $\bc{\Theta}$ equipped with a sigma-field $\bc{\mathcal{T}}$. The location of the thresholds will be determined according to the given prior structure along with a thresholding parameter, $C$.

In general, the expected length and coverage probability of an IE depend on the true parameter value. The issue is that we never know the \textit{true} parameter. Moreover, it is quite rare to obtain an IE which uniformly dominates other IEs. Therefore, we usually summarize the performance measures throughout the set of parameter values. For example, a confidence coefficient, the infimum of coverage probabilities, is the essential summarization used to define a classical confidence interval. However, this is not the only method to summarize the performance. In particular, note that our study assumes the existence of prior information to implement the thresholding approach; thus, we would involve this information by integrating the performance measures with respect to the prior distribution. Because this particular integration is reminiscent of the derivation of the Bayes risk, the resulting measures are denoted as the Bayes expected length (BEL) and the Bayes coverage probability (BCP):
\begin{align}
BEL[\theta,\alpha]&=\int_{\theta\in\Theta}E_{\theta}[\nu(\Gamma(X;\theta))]d\Pi(\theta);\label{eq17}\\
BCP[\theta,\alpha]&=\int_{\theta\in\Theta}P_{\theta}[\theta\in\Gamma(X;\theta)]d\Pi(\theta).\label{eq18}
\end{align}
Moreover, the correponding individual procedure is called the Bayes IE (BIE), so that given $\alpha\in(0,1)$, suppose $X|\theta=(X_1,X_2,\ldots,X_m)|\theta\sim P_\theta$ and $\theta\sim\Pi$. Then, $\Gamma(\cdot;\theta,\alpha)$ is called a $100(1-\alpha)\%$ BIE for $\theta$ if $BCP[\theta,\alpha]\geq 1-\alpha$.
Notice that every classical confidence interval satisfies the condition of the BIE; however, the converse is not necessarily true. This is because the coverage probability at a specific parameter value may be lower than a nominal level, although the procedure maintains the BCP at least at the given level. The modification through the integration allows us to reflect the prior information to the EL and CP in (\ref{eq11}) without losing their characteristics. Therefore, we seek an optimal BIE which minimizes the BEL given a level $1-\alpha$. In a later section, this concept of Bayes interval estimation will be extended to the case with multiple parameters, and the procedure will be called a Bayes multiple interval estimator (BMIE). 

\begin{comment}
The optimal threshold can be obtained through a decision-theoretic optimization. The basic idea is adopted from the optimization procedure in \citet{pena:2011}, which seeks the optimal individual sizes of MTPs under the FWER and false discovery rate (FDR) by using two global loss functions, which represent the global type I error rate and global power. In this study, the global loss functions are replaced by those for MIEs, which are obtained by extending the single dimensional loss function in \citet{casella:1991}. Therefore, the two global loss functions reflect the BAEL and BFWCR. The resulting procedure then provides not only the best threshold but also the the optimal individual levels concurrently. 
\end{comment}

We review related studies in section 2. The BMIE with thresholding (BMIE Thres) for the location parameters of the normal-normal model is introduced in section 3; moreover, its analytic properties and behaviors are studied with respect to the thresholding parameter $C$. In section 4, a decision-theoretic optimization procedure is presented, and the optimal thresholding parameter $C^*$ is investigated. In section 5, the performance of the BMIE Thres is demonstrated by data applications under the known and unknown standard deviations. In-season baseball batting average data is applied to the procedure for the former case, and leukemia gene expression data is applied for the latter case. In section 6, we perform a simulation study to address the case of misspecified prior distributions, as well as to compare BMIE Thres with Bayesian credible MIEs. Lastly, discussion and suggestions for future work are presented in section 7.

\section{Related Studies}
The multiplicity issue is a fundamental problem whenever an inferential procedure attempts to handle a set of parameters simultaneously. Due to this issue, we cannot simply assign the usual 0.05 individual sizes to MTPs or 0.95 individual levels to MIEs because the global type-I error increases or the global coverage probability decreases as the number of parameters increases. (\citet{lehmann:2006}) In earlier studies, such as \citet{miller:1966}, one of the first published books on multiple inferences, the meaning of \textit{multiple} was usually about less than 10 parameters. However, in the 1990's, the dimensions of problems became much higher due to the influence of high-throughput data and the multiplicity issue was magnified in earnest. (\citet{efron:2012}) As a result, the amount of research in MTPs was boosted.   

The FWER is one of the classical global type-I error rate to incorporate the multiplicity issue in MTPs. Suppose we want to build up an MTP which controls the FWER for $M$ target parameters with a global size of $q$. An intuitive multiple adjustment for the individual sizes would be the Bonferroni approach, $q/M$. If we can assume the independence among the individual procedures, then the Sidak approach, $1-(1-q)^{1/M}$, would also be a valid adjustment. This means, by applying these individual sizes to the MTP, the FWER can be bounded above by the global size of $q$. However, these one-step approaches generally result in low global power, e.g., a limited number of rejections. (\citet{shaffer:1995}) In order to overcome this limitation, step-wise approaches were introduced by \citet{holm:1979} and \citet{hochberg:1988}. These procedures utilize the information of ordered $p$-values to assign particular sizes to the corresponding individual testings. Also, \citet{westfall:1993} suggested a resampling procedure in order to use the dependence structure of $p$-values to increase the global power. Another well-known approach is a $p$-value weighting. This approach seeks optimal weights for $p$-values to achieve a higher global power. Naturally, the issue is how to choose the optimal weights. To handle this, \citet{westfall:1998} and \citet{dobriban:2015} assumed prior information. Although the procedures utilize prior information to choose the weights, they never claimed their approaches to be Bayesian, as the procedures are aimed to control the FWER, a frequentist global type-I error rate. Instead, Dobriban called their approach quasi-Bayesian. \citet{pena:2011} considered the problem of multiple testing as a general problem under a decision theoretic framework. Their MTP allocates optimal sizes to individual tests to maximize a global power under the FWER and false discovery rate (FDR) by \citet{benjamini:1995}.  

In early works on multiple interval estimation, e.g., \citet{scheffe:1953} and \citet{roy:1953}, the researchers mainly considered the research field to be a part of multiple comparisons among a small number of parameters within the setting of ANOVA or regression. In addition, \citet{benjamini:2005} pointed out that the cases of ignoring the multiplicity adjustment in multiple interval estimations were more frequent than the cases in multiple testings even after the influence of high-throughput data in the 1990's. Furthermore, some existing studies introduced MIEs as supplementary procedures for established MTPs. However, without information from alternative hypotheses in MIEs, there exists no explicit relation between the global power in the MTPs and the global expected length in the MIEs. This implies that the multiple interval estimation is an independent topic and should be investigated independently. A good example is the relation between the FDR and the false coverage rate (FCR) by \citet{benjamini:2005}. Although these two concepts represent the global type-I error rate and global coverage probability as developed by the same group of researchers, the lack of alternative information forced the authors to introduce the concept of parameter selection in relation to another topic, selective inference. (\citet{fithian:2014})  

There are several studies that investigated MIEs using the empirical Bayes framework, and these are closely related to our study in terms of the modeling perspective. \citet{morris:1983} investigated the empirical Bayes interval estimation under the same setting as \citet{efron:1975}, which studied the point estimation. \citet{casella:1983} studied parametric empirical Bayes confidence sets for multivariate normal means. Due to the shrinkage effect, the empirical Bayes confidence sets provides shifted estimates, which result in better coverage probability. In the next section, we exploit the model setup of the parametric empirical Bayes for our MIE to establish a pair of thresholds. A similar thresholding idea was considered in \citet{habiger:2014} for MTPs to maximize a global power.

\section{Bayes Multiple Interval Estimator with Thresholding}
\subsection{Bayes Multiple Interval Estimator}
\leavevmode
Consider a statistical model $(\bc{\mathcal{X}},\bc{\mathcal{F}},\bc{\mathcal{P}})$ where $\bc{\mathcal{X}}=\bigotimes_{m=1}^M\mathcal{X}_m$ is a product sample space, $\bc{\mathcal{F}}=\sigma\left(\bigotimes_{m=1}^M\mathcal{F}_m\right)$ is an associated product $\sigma$-field, and $\bc{\mathcal{P}}=\prod_{m=1}^M\mc{P}_m$ is a class of probability distributions on the sample space. 
In addition, $(\bc{\Theta},\bc{\mathcal{T}},\bc{\Pi})$ is another probability space with $ \bc{\Theta}=\bigotimes_{m=1}^M\Theta_m$, $\bc{\mathcal{T}}=\sigma\left(\bigotimes_{m=1}^M\mathcal{T}_m\right)$, and $\bc{\Pi}=\prod_{m=1}^M\Pi_m$ is a class of prior probability distributions on the parameter space.
An $M$ dimensional random quantity $\bc{X}$ is generated from $\bc{P_\theta}$, where $\bc{X}$ consists of $X_m$s which are independent throughout the index $m=1,2,\ldots,M$. For each $\theta_m\in\Theta_m$, denote the prior density of $\Pi_m$ by $\pi_m$ and denote the density of $P_{\theta_m}$ by $f_{\theta_m}$; moreover, assume the mapping $(\theta_m,x_m)\mapsto f_{\theta_m}(x_m)$ is product-measurable. 

Given this setting, the coverage probability for an individual IE is summarized by the BCP in (\ref{eq18}), the coverage probability integrated with respect to the prior distribution. Note we defined the BIE based on the BCP. Now, we define the multiple extension of BIE as follows: 
\begin{defin}
	\label{defin2}
	Given $q\in(0,1)$, $100(1-q)\%$ Bayes Multiple Interval Estimator (BMIE) for $\bc{\theta}$ is a mapping, $\bc{\Gamma(\cdot)}: \bc{\mathcal{X}}\longrightarrow \bc{\mathcal{T}}$, such that 
	\begin{align} \label{eq31}
	\prod_{m=1}^M\int_{\Theta_m} P_{\theta_m}[\theta_m\in\Gamma_m(X_m;\theta_m,\alpha_m)]d\Pi_m(\theta_m)\geq 1-q.
	\end{align}
	The left-hand side quantity is the global coverage probablity of $\bc{\Gamma}$ and is called the Bayes family-wise coverage rate (BFWCR). 
\end{defin}
Note that we could involve another random quantity $\bc{U}$, which is independent of $\bc{X}$, from $M$ standard uniform distributions to consider as randomizers. With the additional random quantity, it is possible to design a randomized BMIE which has its BFWCR always equals to the global level, $1-q$. In this study, we omit this additional procedure for simplicity. Similarly, the expected length for an IE is summarized by BEL in (\ref{eq17}), the expected length integrated with respect to the prior distribution. With multiple parameters, the extended length of an BMIE is defined by a global expected content as follows:
\begin{defin}
	\label{defin3}
	The Bayes average expected length (BAEL) is the global expected content of a Bayes Multiple Intervel Estimator (BMIE) and is defined as follows:
	\begin{align} \label{eq32}
	\frac{1}{M}\sum_{m=1}^M	\int_{\Theta_m} E_{\theta_m}[\nu(\Gamma_m(X_m;\theta_m,\alpha_m))]d\Pi_m(\theta_m),
	\end{align}
	where $\nu$ is the content of individual IEs.
\end{defin}
Notice that the content, $\nu$, is a measure which quantifies
\textit{the general length} of IEs. It is general since $
\nu$ is not always the Lebesgue measure, which agrees with our usual perception of length. Instead, it can be chosen from a class of measures in accordance with the type of target parameter so that the Invariance Principle is satisfied.  
For example, whereas the content for a location parameter requires the Lebesgue measure $d\lambda(\theta)$, the content for a scale parameter requires another measure, $\tfrac{1}{\theta}d\lambda(\theta)$. Refer to chapter 6 of \citet{berger:2013} for details.

\subsection{Thresholding Idea for Bayes Multiple Interval Estimator}
\leavevmode
In this subsection, we introduce a particular BMIE equipped with a pair of thresholds to reduce its BAEL. This procedure is called an BMIE with thresholding (BMIE Thres). Suppose we are interested in $M$ normal means with known standard deviations. 
Random samples are observed from the normal distributions and these are independent throughout the index $m$ from 1 to $M$. In addition, we assume the prior information of the location parameters is available in the form of a normal distribution with the hyper-parameters, $\eta$ and $\tau$:
\begin{align}\label{setting1}
\bar{X}_m|\mu_m\sim \mathcal{N}\left(\mu_m,\sigma^2_m\right)\;\;\&\;\mu_m\sim \mathcal{N}\left(\eta,\tau^2\right)\;\text{for}\;m=1,2,\ldots,M.
\end{align}
Here, the problem is simplified by the Sufficiency Principle and let $\sigma_m$'s, which depend on the sample sizes, denote the standard errors, without loss of generality. Since we assume a common prior distribution for every location parameter, the setting becomes identical to the parametric empirical Bayes framework in \citet{efron:1975} and \citet{casella:1983}.
Recall the well known fact that the posterior distribution of the normal-normal model also follows a normal distribution (\citet{berger:2013}):
\begin{align}\label{setting2}
\mu_m|\bar{x}_m\stackrel{\text{ind.}}{\sim}\mathcal{N}\left(\hat{\mu}_m,\beta_m\sigma_m^2   \right)\;\text{for}\;m=1,2,\ldots,M
\end{align}
where $\hat{\mu}_m=\beta_m \bar{x}_m+(1-\beta_m)\eta$ and $\beta_m=\frac{\tau^2}{\tau^2+\sigma^2_m}$. In particular, the posterior mean $\hat{\mu}_m$ is a convex combination of the maximum likelihood estimate $\bar{x}_m$ and the prior mean $\eta$. 

\begin{figure}[ht]
	\centering 
	\includegraphics[scale=0.42]{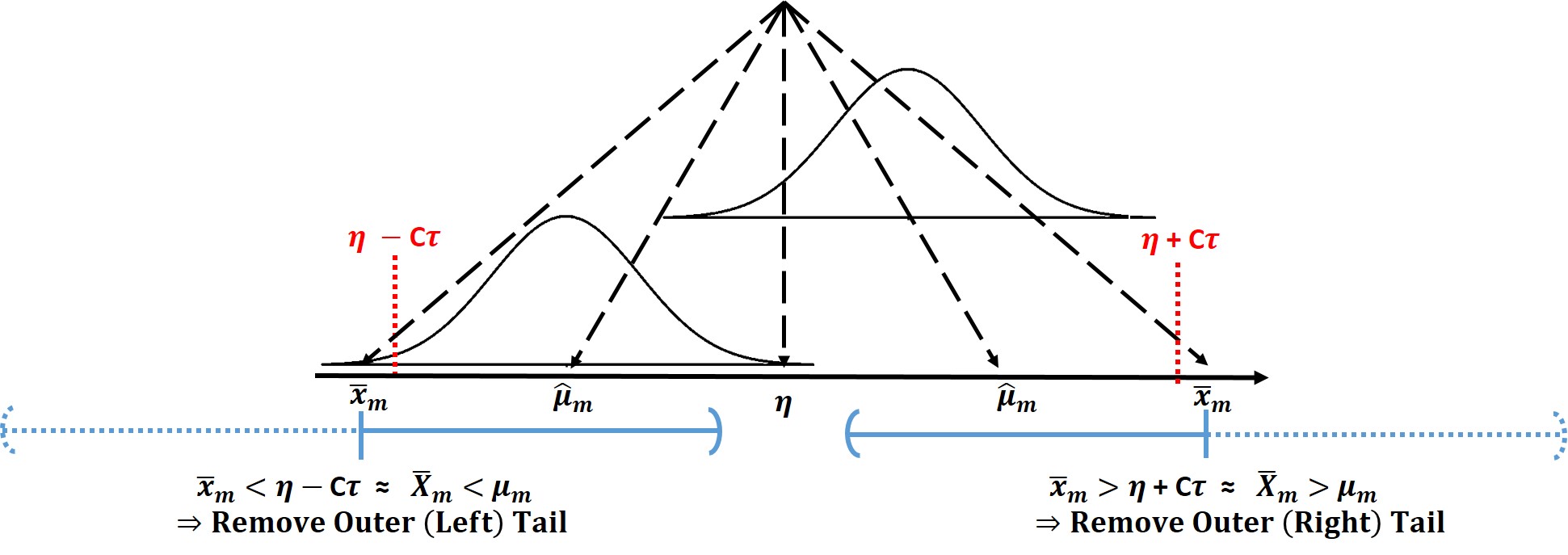} 
	\caption[Idea of Thresholding Approach]{Idea of Thresholding Approach} % The text in the square bracket is the caption for the list of figures while the text in the curly brackets is the figure caption
	\label{idea} 
\end{figure}
Now, to illustrate the idea of thresholding, suppose $\sigma_m$'s are identical and $\tau$ is fixed so that all the posterior distributions have a constant dispersion.
Next, envision a situation in which an observed $\bar{x}_m$ deviates from the prior mean, $\eta$. Then the distance between $\bar{x}_m$ and $\hat{\mu}_m$ becomes larger in proportion to the increase of the distance between $\bar{x}_m$ and $\eta$. Still, we construct an IE in a classical way, letting $\bar{x}_m$ to be the center of the $m$th observed IE. Let us call the inner tail for the side of IE close to the $\eta$ and the outer tail for the one in the opposite side. Note that as the estimate $\bar{x}_m$ differs more from the $\eta$, the posterior probability that $\mu_m$ resides in the inner tail becomes greater. In this situation, we would remove the outer tail to reduce the global expected content. 

The decision on the removal can be made by using a pair of thresholds on both sides: $\eta-C\tau$ and $\eta+C\tau$. Therefore, once an estimate $\bar{x}_m$ falls outside of the thresholds, then we discard the outer tail and keep only the inner tail.   
In actual situations, $\sigma_m$ would vary; however, the idea remains the same although the effect of thresholding is affected by the size of $\sigma_m$ in relation to the size of $\tau$.
With this motivation, the $m$th BIE with thresholding (BIE Thres) has the following form:
\begin{align}\label{eq33}
\Gamma_m(\bc{X}_m;\mu_m,\alpha_m)=\left(\bar{X}_m-z_{\alpha_m/2}\sigma_mI\left\{\bar{X}_m>\eta-C\tau\right\},\bar{X}_m+z_{\alpha_m/2}\sigma_mI\left\{\bar{X}_m<\eta+C\tau\right\}  \right)
\end{align}
where $z_{\alpha_m/2}=\Phi^{-1}\left(1-\alpha_m/2\right)$. Note that $\eta$ and $\tau$ will be estimated in the actual implementation. 
Therefore, although a single estimator is initially simple as it depends only on $\bar{X}_m$ with respect to the given level, $1-\alpha_m$, the BIE Thres becomes compound for $X_1,X_2,\ldots,X_M$ after the hyper-parameter estimation and optimization as in the later sections.   

\subsection{Individual Performance Measures}
\leavevmode
In this subsection, we ascertain the forms of performance measures of a single BIE Thres under the normal-normal model. In addition, their properties with respect to $C$ are investigated in relation to the classical $z$-based confidence interval estimator for a normal location parameter ($z$-based IE). First, the following proposition shows the form and property of the BEL of the $m$th BIE Thres:
\begin{restatable}{prop}{propa}
	\label{prop1}
	The Bayes expected length (BEL) of the $m$th BIE Thres has the form of 
	\begin{align} \label{eq34}
	BEL[\mu_m,\alpha_m,C;\sigma_m,\tau]=2z_{\alpha_m/2}\sigma_m\Phi(C_m)
	\end{align} 
	where $C_m=C\tau/\sqrt{\sigma_m^2+\tau^2}$, and the BEL approaches the expected length of the corresponding $z$-based IE as $C$ goes to $\infty$.
\end{restatable}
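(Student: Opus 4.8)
The plan is to reduce the content functional to an explicit function of $\bar{X}_m$, take its expectation under the joint law of $(\mu_m,\bar{X}_m)$ by marginalizing out $\mu_m$, and then read off the $C\to\infty$ limit. First I would compute the Lebesgue content (length) of the thresholded interval in (\ref{eq33}); subtracting its lower endpoint from its upper endpoint gives
\begin{align*}
\nu\bigl(\Gamma_m(\bc{X}_m;\mu_m,\alpha_m)\bigr)
= z_{\alpha_m/2}\,\sigma_m\Bigl(I\{\bar{X}_m<\eta+C\tau\}+I\{\bar{X}_m>\eta-C\tau\}\Bigr).
\end{align*}
A three-region case analysis on the position of $\bar{X}_m$ relative to the thresholds $\eta-C\tau$ and $\eta+C\tau$ shows that for $C\ge 0$ the two indicators are never simultaneously zero, exactly one is active when $\bar{X}_m\notin[\eta-C\tau,\,\eta+C\tau]$, and both are active otherwise, so that
\begin{align*}
\nu\bigl(\Gamma_m(\bc{X}_m;\mu_m,\alpha_m)\bigr)
= z_{\alpha_m/2}\,\sigma_m\bigl(1+I\{\,|\bar{X}_m-\eta|<C\tau\,\}\bigr).
\end{align*}

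Since this content depends on $\mu_m$ only through $\bar{X}_m$, the iterated integral defining $BEL$ in (\ref{eq17}) collapses --- by Tonelli's theorem, the integrand being nonnegative and product-measurable --- to a single expectation of the displayed content under the marginal law of $\bar{X}_m$. Under the normal-normal model (\ref{setting1}), integrating $\mu_m\sim\mathcal{N}(\eta,\tau^2)$ out of $\bar{X}_m\mid\mu_m\sim\mathcal{N}(\mu_m,\sigma_m^2)$ gives $\bar{X}_m\sim\mathcal{N}(\eta,\sigma_m^2+\tau^2)$, so standardizing yields
\begin{align*}
P\bigl(|\bar{X}_m-\eta|<C\tau\bigr)=\Phi(C_m)-\Phi(-C_m)=2\Phi(C_m)-1,
\qquad C_m=\frac{C\tau}{\sqrt{\sigma_m^2+\tau^2}}.
\end{align*}
Substituting into the expectation of the content and using $1+\bigl(2\Phi(C_m)-1\bigr)=2\Phi(C_m)$ produces $BEL[\mu_m,\alpha_m,C;\sigma_m,\tau]=2z_{\alpha_m/2}\sigma_m\Phi(C_m)$, the asserted form; note the dependence on $\mu_m$ has been integrated away.

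For the limiting claim I would simply observe that $\sigma_m$ and $\tau$ are held fixed, so $C_m\to\infty$ as $C\to\infty$, whence $\Phi(C_m)\to 1$ and $BEL\to 2z_{\alpha_m/2}\sigma_m$ --- precisely the (deterministic, hence expected) length of the classical $z$-based interval $\bigl(\bar{X}_m-z_{\alpha_m/2}\sigma_m,\;\bar{X}_m+z_{\alpha_m/2}\sigma_m\bigr)$. I do not anticipate any substantive obstacle; the only steps warranting care are the indicator case analysis that produces the clean form of $\nu$ and the observation that it is the marginal distribution of $\bar{X}_m$ --- not its posterior or its sampling distribution --- that governs $BEL$.
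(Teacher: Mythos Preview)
Your proposal is correct and follows essentially the same approach as the paper: both reduce the content to $z_{\alpha_m/2}\sigma_m\bigl(1+I\{|\bar X_m-\eta|<C\tau\}\bigr)$ and then integrate under the normal--normal model. The only difference is computational order: the paper evaluates the iterated expectation $E_{\mu_m}E_{\bar X_m\mid\mu_m}[\cdot]$ explicitly (computing the conditional probability as a function of $\mu_m$ and then integrating via the identity $E_{Z}[\Phi((Z\pm C)\tau/\sigma_m)]=\Phi(\pm C_m)$), whereas you pass directly to the marginal $\bar X_m\sim\mathcal N(\eta,\sigma_m^2+\tau^2)$ and compute the probability in one step---a slightly cleaner route to the same endpoint.
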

Notice that the BEL consists of two parts: the first part, $2z_{\alpha_m/2}\sigma_m$, is the same as the BEL of the $z$-based IE; and the second part, $\Phi(C_m)$, reflects the thresholding effect. From the form of $C_m$, it is evident that the BEL of the BIE Thres approaches the BEL of the $z$-based IE as $C$ goes to infinity. Next, we introduce the form and property of the BCP of the $m$th BIE Thres:
\begin{restatable}{prop}{propb}
	\label{prop2}
	The Bayes coverage probability (BCP) of the $m$th BIE Thres has the form of 
	\begin{align} \label{eq35}
	BCP[\mu_m,\alpha_m,C;\sigma_m,\tau]=2\int_{-\infty}^{C_m}\left\{\Phi\left(\tfrac{\sigma_m}{\tau}y+\sqrt{1+\tfrac{\sigma^2_m}{\tau^2}}z_{\alpha_m/2}\right)-\Phi\left(\tfrac{\sigma_m}{\tau}y\right)\right\}d\Phi(y)
	\end{align}
	where $C_m=C\tau/\sqrt{\sigma_m^2+\tau^2}$, and the BCP approaches the coverage probability of the corresponding $z$-based IE as $C$ goes to $\infty$.
\end{restatable}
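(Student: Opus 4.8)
The plan is to evaluate the BCP through the joint distribution of $(\bar X_m,\mu_m)$ induced by (\ref{setting1}) rather than from the prior integral directly. By Fubini's theorem the quantity $\int_{\Theta_m}P_{\mu_m}[\mu_m\in\Gamma_m(\bar X_m;\mu_m,\alpha_m)]\,d\Pi_m(\mu_m)$ equals the probability of the coverage event computed under the hierarchical law $\bar X_m\mid\mu_m\sim\mathcal N(\mu_m,\sigma_m^2)$, $\mu_m\sim\mathcal N(\eta,\tau^2)$. I would then condition on $\bar X_m=\bar x_m$ and use the posterior $\mu_m\mid\bar x_m\sim\mathcal N(\hat\mu_m,\beta_m\sigma_m^2)$ from (\ref{setting2}), so that the BCP becomes the marginal expectation over $\bar X_m$ of a conditional coverage probability that can be written explicitly.

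Next I would partition the range of $\bar X_m$ according to the thresholds $\eta\pm C\tau$. On $\{\bar x_m>\eta+C\tau\}$ the estimator (\ref{eq33}) keeps only the inner (left) half, so the conditional coverage probability is $P[\bar x_m-z_{\alpha_m/2}\sigma_m<\mu_m<\bar x_m\mid \bar X_m=\bar x_m]$; on $\{|\bar x_m-\eta|\le C\tau\}$ it is the two-sided $P[|\mu_m-\bar x_m|<z_{\alpha_m/2}\sigma_m\mid\bar X_m=\bar x_m]$; on $\{\bar x_m<\eta-C\tau\}$ it is the inner (right) half. Writing each as a difference of two values of $\Phi$ and making the change of variables $y=(\bar x_m-\eta)/\sqrt{\sigma_m^2+\tau^2}$ — which turns the marginal law of $\bar X_m$ into the standard normal and the thresholds into $\pm C_m$ — the arguments of $\Phi$ simplify using $\hat\mu_m-\bar x_m=-(1-\beta_m)(\bar x_m-\eta)$, $1-\beta_m=\sigma_m^2/(\sigma_m^2+\tau^2)$, and $\sqrt{\beta_m}\,\sigma_m=\tau\sigma_m/\sqrt{\sigma_m^2+\tau^2}$. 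This yields the integrand $\Phi(\tfrac{\sigma_m}{\tau}y)-\Phi(\tfrac{\sigma_m}{\tau}y-\sqrt{1+\sigma_m^2/\tau^2}\,z_{\alpha_m/2})$ on $(C_m,\infty)$, the integrand $\Phi(\tfrac{\sigma_m}{\tau}y+\sqrt{1+\sigma_m^2/\tau^2}\,z_{\alpha_m/2})-\Phi(\tfrac{\sigma_m}{\tau}y-\sqrt{1+\sigma_m^2/\tau^2}\,z_{\alpha_m/2})$ on $(-C_m,C_m)$, and the mirror integrand on $(-\infty,-C_m)$.

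The crux is to collapse these three integrals into the single one in (\ref{eq35}). Since $d\Phi$ is symmetric, the substitution $y\mapsto -y$ combined with $\Phi(-t)=1-\Phi(t)$ shows that the integral over $(C_m,\infty)$ equals that over $(-\infty,-C_m)$, and that the two-sided difference on $(-C_m,C_m)$ integrates to twice $\int_{-C_m}^{C_m}\{\Phi(\tfrac{\sigma_m}{\tau}y+\sqrt{1+\sigma_m^2/\tau^2}\,z_{\alpha_m/2})-\Phi(\tfrac{\sigma_m}{\tau}y)\}\,d\Phi(y)$. Summing the three contributions telescopes to $2\int_{-\infty}^{C_m}\{\Phi(\tfrac{\sigma_m}{\tau}y+\sqrt{1+\sigma_m^2/\tau^2}\,z_{\alpha_m/2})-\Phi(\tfrac{\sigma_m}{\tau}y)\}\,d\Phi(y)$, which is (\ref{eq35}). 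I expect this sign-flip bookkeeping — tracking which difference of $\Phi$'s sits on which region and matching them under reflection — to be the only subtle step; the rest is standard conditioning and substitution, with the usual remark that replacing strict by non-strict inequalities in the indicators of (\ref{eq33}) is harmless because $\bar X_m$ is continuous.

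Finally, for the limit, as $C\to\infty$ we have $C_m=C\tau/\sqrt{\sigma_m^2+\tau^2}\to\infty$; the integrand in (\ref{eq35}) is nonnegative and bounded by $1$ and the upper limit increases, so monotone convergence gives $BCP\to 2\int_{-\infty}^{\infty}\{\Phi(\tfrac{\sigma_m}{\tau}y+\sqrt{1+\sigma_m^2/\tau^2}\,z_{\alpha_m/2})-\Phi(\tfrac{\sigma_m}{\tau}y)\}\,d\Phi(y)$. This limiting value is precisely the BCP of the $z$-based IE, obtained by setting both indicators in (\ref{eq33}) equal to $1$; and because $\bar X_m-\mu_m\sim\mathcal N(0,\sigma_m^2)$ irrespective of $\mu_m$, the coverage probability of the $z$-based IE is identically $1-\alpha_m$, so its BCP coincides with that constant coverage probability. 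This establishes the claimed convergence and completes the proof.
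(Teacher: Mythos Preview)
Your proposal is correct and follows essentially the same route as the paper: both switch from the prior integral to the posterior--marginal factorization, standardize the marginal $\bar X_m$ to $y=(\bar X_m-\eta)/\sqrt{\sigma_m^2+\tau^2}$, split into the three regions $(-\infty,-C_m)$, $(-C_m,C_m)$, $(C_m,\infty)$, and then collapse via the reflection $y\mapsto -y$ together with $\Phi(-t)=1-\Phi(t)$. The only notable difference is the limiting step: the paper evaluates the $C\to\infty$ integral directly using the Gaussian identity $\int_{-\infty}^\infty \Phi(ay+b)\,d\Phi(y)=\Phi\!\big(b/\sqrt{1+a^2}\big)$, whereas you argue more conceptually that the limit is the BCP of the $z$-based IE, which equals the constant $1-\alpha_m$ --- both arguments are valid and arrive at the same conclusion.
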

The BCP of the BIE Thres has no closed form; nevertheless, it is twice differentiable with respect to $\alpha_m$, so a gradient-based optimization can be implemented in the next section. It is also not difficult to show that the BCP of the BIE Thres approaches $1-\alpha_m$, the BCP of the $z$-based IE.

\subsection{Global Performance Measures}
\leavevmode
Now, we investigate global performance measures of an BMIE Thres. All the measures are derived by assuming $M$ target parameters. First, we introduce a Bayes threshold ratio (BTR). It is the ratio of the number of the thresholded BIEs to the total number of the BIEs in an BMIE Thres, i.e., the proportion of the one-sided BIEs. The form of the BTR is presented in the following proposition:
\begin{restatable}{prop}{propc}
	\label{prop3}
	The Bayes threshold ratio (BTR) has the form of  
	\begin{align}\label{eq36}
	BTR[\bc{\mu},\bc{\alpha},C;\bc{\sigma},\tau]=\tfrac{2}{M}\sum_{i=1}^M\Phi(-C_m)
	\end{align}
	where $C_m=C\tau/\sqrt{\sigma_m^2+\tau^2}$. As $C$ increases from 0 to $\infty$, the BTR decreases from 1 to 0.  
\end{restatable}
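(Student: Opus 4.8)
The plan is to express the BTR as an average of tail probabilities and to evaluate each one using the prior predictive (marginal) distribution of $\bar{X}_m$. First I would pin down precisely when the $m$th BIE Thres in (\ref{eq33}) collapses to a one-sided interval. Inspecting the two indicators $I\{\bar{X}_m>\eta-C\tau\}$ and $I\{\bar{X}_m<\eta+C\tau\}$, both equal $1$ --- giving the usual two-sided interval --- exactly when $\bar{X}_m\in(\eta-C\tau,\eta+C\tau)$, whereas exactly one of them vanishes, so that a single tail is discarded, when $\bar{X}_m\le\eta-C\tau$ or $\bar{X}_m\ge\eta+C\tau$. Hence the $m$th BIE is one-sided if and only if $\bar{X}_m\notin(\eta-C\tau,\eta+C\tau)$, so that, interpreting the BTR as the prior-predictive expected fraction of one-sided intervals and invoking linearity of expectation,
\begin{align*}
BTR[\bc{\mu},\bc{\alpha},C;\bc{\sigma},\tau]=\frac{1}{M}\sum_{m=1}^M E\!\left[I\{\bar{X}_m\notin(\eta-C\tau,\eta+C\tau)\}\right],
\end{align*}
the expectation being over the joint law of $(\mu_m,\bar{X}_m)$ specified in (\ref{setting1}).

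Next I would compute the marginal distribution of $\bar{X}_m$. Integrating the conditional $\mathcal{N}(\mu_m,\sigma_m^2)$ against the prior $\mu_m\sim\mathcal{N}(\eta,\tau^2)$ --- equivalently, writing $\bar{X}_m$ as the sum of independent $\mathcal{N}(\eta,\tau^2)$ and $\mathcal{N}(0,\sigma_m^2)$ components --- gives $\bar{X}_m\sim\mathcal{N}(\eta,\sigma_m^2+\tau^2)$. Standardizing the lower tail yields
\begin{align*}
P[\bar{X}_m\le\eta-C\tau]=\Phi\!\left(\frac{-C\tau}{\sqrt{\sigma_m^2+\tau^2}}\right)=\Phi(-C_m),
\end{align*}
and, since this marginal is symmetric about $\eta$, the upper-tail probability $P[\bar{X}_m\ge\eta+C\tau]$ equals the same value $\Phi(-C_m)$. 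The two tail events are disjoint, so the $m$th BIE is one-sided with probability $2\Phi(-C_m)$; averaging over $m=1,\ldots,M$ gives (\ref{eq36}).

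For the final assertion, note that when $\tau>0$ the map $C\mapsto C_m=C\tau/\sqrt{\sigma_m^2+\tau^2}$ is a strictly increasing bijection of $[0,\infty)$ onto $[0,\infty)$, so each $\Phi(-C_m)$ decreases continuously and strictly from $\Phi(0)=1/2$ at $C=0$ to $\lim_{t\to\infty}\Phi(-t)=0$ as $C\to\infty$. Hence $2\Phi(-C_m)$ decreases from $1$ to $0$ for every $m$, and so does the finite average in (\ref{eq36}); in particular $BTR=1$ at $C=0$ and $BTR\to0$ as $C\to\infty$.

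I do not anticipate a serious obstacle: the argument is essentially a marginalization followed by a monotonicity remark. The two points that deserve a little care are making explicit that the BTR is defined as the prior-predictive expected proportion of one-sided intervals, so the $M$ coordinates can be handled one at a time via linearity of expectation, and observing that the boundary cases $\bar{X}_m=\eta\pm C\tau$ occur with probability zero under the absolutely continuous marginal and hence do not affect any of the probabilities computed above.
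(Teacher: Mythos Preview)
Your proposal is correct and follows essentially the same route as the paper: identify the thresholding event as $\bar{X}_m\notin(\eta-C\tau,\eta+C\tau)$, use the marginal law $\bar{X}_m\sim\mathcal{N}(\eta,\sigma_m^2+\tau^2)$, and compute the two tail probabilities to obtain $2\Phi(-C_m)$. Your treatment is in fact slightly more thorough than the paper's, since you spell out the monotonicity in $C$ and the harmless measure-zero boundary cases, whereas the paper leaves those implicit.
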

In addition to the BTR, we have two more global measures which extend the BEL and BCP to the case with multiple parameters. For the global expected content, we already defined the BAEL in (\ref{eq32}). However, it is positively unbounded, hence it is not a meaningful measure of the quality of the MIE. Instead, we define a Bayes relative expected length (BREL), the ratio of the BAEL of an BMIE Thres to the BAEL of the corresponding $z$-based MIE. Then the BREL is always bounded between 0 and 1. For the global coverage probability, we utilize the BFWCR in (\ref{eq31}) based on the given normal-normal setting.   
\begin{restatable}{cor}{coro}
	\label{cor} Given $M$ parameters and a global level $1-q$, the Bayes relative expected length (BREL), Bayes family-wise coverage rate (BFWCR) and Bayes thresholding ratio (BTR) have the following forms:
	\begin{align*}
		BREL[\bc{\mu},\bc{\alpha},C;\bc{\sigma},\tau]=&\left.\frac{1}{M}\sum_{m=1}^M\left[2z_{\alpha_S}\sigma_m\Phi(C_m)\right]\middle/ \frac{1}{M}\sum_{m=1}^M\left[2z_{\alpha_S}\sigma_m\right]\right.;\\
		BFWCR[\bc{\mu},\bc{\alpha},C;\bc{\sigma},\tau]=&\prod_{m=1}^M\left[2\int_{-\infty}^{C_m}\left\{\Phi\left(\frac{\sigma_m}{\tau}y+\sqrt{1+\tfrac{\sigma^2_m}{\tau^2}}z_{\alpha_S}\right)-\Phi\left(\tfrac{\sigma_m}{\tau}y\right)\right\}d\Phi(y)\right];\\
		BTR[\bc{\mu},\bc{\alpha},C;\bc{\sigma},\tau]=&\frac{2}{M}\sum_{m=1}^M\Phi(-C_m)
	\end{align*}
	where $C_m=C\tau/\sqrt{\sigma_m^2+\tau^2}$, and $\alpha_S$ is the Sidak adjustment, $1-(1-q)^{1/M}$. As $C$ increases from 0 to $\infty$, the BREL, BFWCR, and BTR converge to 1, $1-q$, and 0, respectively.  
\end{restatable}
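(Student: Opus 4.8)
The plan is to obtain all three displayed formulas by direct assembly of Propositions \ref{prop1}, \ref{prop2}, and \ref{prop3}, specialized to the common individual level $\alpha_m\equiv\alpha_S=1-(1-q)^{1/M}$, and then to read off the limiting behavior from the limit statements already contained in those propositions. Since every ingredient has been established, this is a bookkeeping argument; the only genuinely new element is the choice of the Sidak adjustment and the use of independence across $m$ to turn the product of individual coverages into the exact value $1-q$ in the limit.

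For the BREL, apply Definition \ref{defin3} twice. The BAEL of the BMIE Thres is $\frac1M\sum_{m=1}^M BEL[\mu_m,\alpha_S,C;\sigma_m,\tau]$, and Proposition \ref{prop1} evaluates each summand as $2z_{\alpha_S}\sigma_m\Phi(C_m)$. The BAEL of the corresponding $z$-based MIE is computed the same way, except that its $m$th interval has deterministic content $2z_{\alpha_S}\sigma_m$ (independent of both $\bar X_m$ and $\mu_m$), so its BAEL is $\frac1M\sum_{m=1}^M 2z_{\alpha_S}\sigma_m$; taking the ratio gives the stated BREL, which lies in $(0,1]$ because $0<\Phi(C_m)\le 1$. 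For the BFWCR, recall that the $X_m$ are independent across $m$, so the events $\{\mu_m\in\Gamma_m(\bar X_m;\mu_m,\alpha_S)\}$ are independent and the left side of (\ref{eq31}) factors as $\prod_{m=1}^M BCP[\mu_m,\alpha_S,C;\sigma_m,\tau]$; Proposition \ref{prop2} then identifies each factor with the integral written in the statement, which is exactly the claimed product form. The BTR formula is simply Proposition \ref{prop3} with $\alpha_m\equiv\alpha_S$, so no further work is needed there.

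It remains to verify the limits as $C\to\infty$. Since $C_m=C\tau/\sqrt{\sigma_m^2+\tau^2}\to\infty$ for each $m$, Proposition \ref{prop1} gives $\Phi(C_m)\to 1$, so each numerator summand of the BREL converges to the matching denominator summand and hence $BREL\to 1$; Proposition \ref{prop3} gives $\Phi(-C_m)\to 0$, so $BTR\to 0$; and Proposition \ref{prop2} (equivalently, dominated convergence applied to the bounded integrand as $C_m\to\infty$) gives that each BCP factor converges to $1-\alpha_S$, whence the finite product converges to $(1-\alpha_S)^M=((1-q)^{1/M})^M=1-q$. The only step requiring a moment's care is this last passage to the limit inside the BFWCR — moving the limit through the finite product is immediate, and if one prefers to evaluate the limiting integral from scratch rather than quote the remark following Proposition \ref{prop2}, dominated convergence applies since the integrand is bounded by $1$ — but this is a minor point rather than a real obstacle, and the corollary follows.
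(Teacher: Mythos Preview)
Your proof is correct and follows essentially the same approach as the paper: it assembles the three formulas from Propositions \ref{prop1}--\ref{prop3} together with Definitions \ref{defin2} and \ref{defin3}, specialized to the Sidak level $\alpha_S$. Your treatment is in fact more detailed than the paper's, which does not spell out the limiting arguments, but the substance is identical.
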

The global performance measures depend on the thresholding parameter $C$. Thus, it is worthwhile to visually ascertain their behaviors with respect to $C$. Figure \ref{GPQ} is based on an BMIE Thres for $M=1000$ normal location parameters under the global level $1-q=0.9$. For the setting, we assign an equi-spaced sequence from 0.01 to 10 for the standard errors, $\sigma_m$'s, and three different values 2, 3, and 5 for the prior standard deviation, $\tau$. Then we plot the graphs of global measures with respect to $C$ from 0 to 6.  

\begin{figure}[ht]
	\centering
	\includegraphics[scale=0.32]{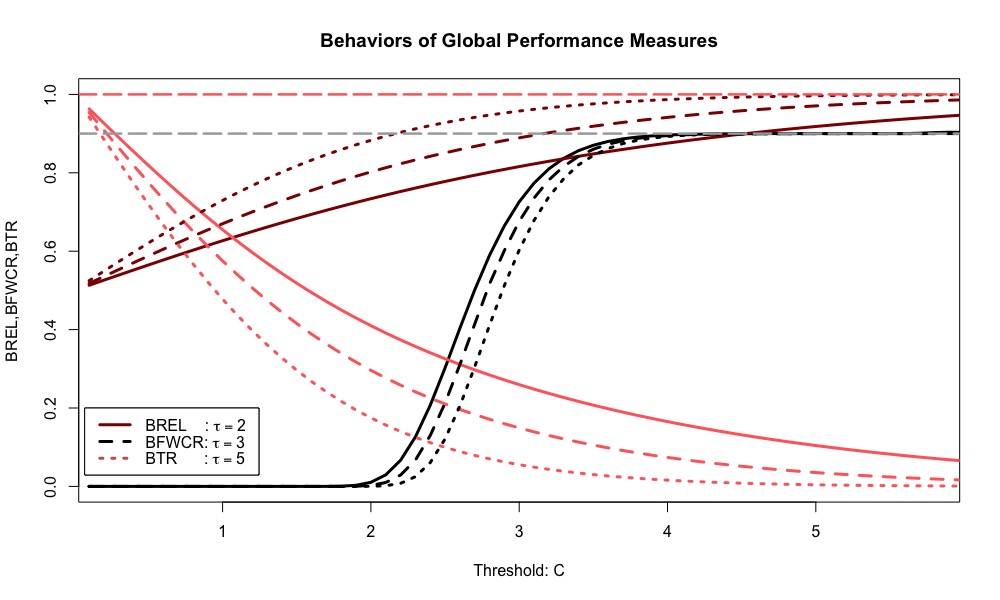}
	\caption[Three Global Measures of BMIE Thres]{Three Global Measures of BMIE Thres for different $\tau$'s} 
	\label{GPQ} 
\end{figure}
The solid, dashed, and dotted lines indicate the different $\tau$'s: 2, 3, and 5, respectively; and dark red, black, and light red color lines indicate the BREL, BFWCR, and BTR of the BMIE Thres.  
First, note that the opposite behaviors of the BTR and BREL. When $C$ equals zero, every BIE lacks an outer tail, so the BTR is one and the BREL is one half. As $C$ increases, the number of one-sided BIEs decreases, and this results in the decrease of the BTR and the increase of the BREL. The rates of decrease and increase depend on the value of $\tau$ since the smaller $\tau$ of the true distributions amounts to more information to remove one side of the BIEs in general. On the other hand, the BFWCR shows an interesting behavior. It remains almost zero up to a $C$ of around 2. This is because each BCP is much less than the given individual level; as a result, the BFWCR, i.e., the product of $M=1000$ BCPs, becomes almost zero. However, the BFWCR rapidly increases as $C$ passes the interval from 2 to 3, and it almost reaches the global level $1-q=0.9$ at around 3.5 for any $\tau$'s. When this occurs, we can clearly observe that the BREL is less than 1, implying the BAEL of BMIE Thres is smaller than that of $z$-based MIE. That is, the BMIE Thres provides a better precision while maintaining the same level of accuracy.

At this point, it is natural to ask about the optimal value for $C$. The optimal $C^*$ is the value which provides the smallest BREL of the BMIE Thres, while at the same time maintaining the BFWCR at least the global level. However, Figure \ref{GPQ} shows both the BREL and the BFWCR change as $C$ varies. Therefore, in order to determine the optimal thresholding parameter, $C^*$, we must first find a way to adjust the BFWCR to match the global level so that the BRELs can be properly compared throughout the values of $C$. This process can be achieved using the optimization method in the next section.

\section{Optimization}
In this subsection, we perform an optimization procedure which determines the best $C^*$ for an BMIE Thres. In the previous subsection, the BMIE Thres shows its limited ability in maintaining the BFWCR to be at least the global level when $C$ is small. The optimization procedure will provide a way to push up the BFWCR to the global level so that we can determine the optimal $C^*$ which provides the minimum BREL, or equivalently the minimum BAEL. Furthermore, the approach also allows us to assign optimal levels, $\alpha_m^*$'s, to individual BIEs. Here, we only describe the essence of the optimization method. Refer to Appendix A for details.

\subsection{Optimization Procedure}
\leavevmode
The optimization procedure consists of two global risk functions, $\boldsymbol{R_0^\beta}$ and $\boldsymbol{R_1}$, which represent the adjusted BREL and 1-BFWCR, respectively. We first set up the following optimization problem:  
\begin{align}\label{eq41}
\text{minimize}\;\; \boldsymbol{R^\beta_0(\theta,\delta)}\;\;\text{subject to}\;\;\boldsymbol{R_1(\theta,\delta)}\leq q
\end{align}
Here, the global risks can be expressed with the performance measures:
\begin{align}\label{eq42}
	\boldsymbol{R^\beta_0(\theta,\delta)}=&\frac{1}{M}\sum_{m=1}^M\frac{BEL[\mu_m,\alpha_m,C;\sigma_m,\tau]}{\beta+BEL[\mu_m,\alpha_m,C;\sigma_m,\tau]}; \nonumber\\
	\boldsymbol{R_1(\theta,\delta)}=&1-\prod_{i=1}^MBCP[\mu_m,\alpha_m,C;\sigma_m,\tau] 
\end{align}
where $C_m=C\tau/\sqrt{\sigma_m^2+\tau^2}$.
To make numerical implementation stable, we reparametrize $\alpha_m$'s to $\nu_m$'s, such that $\nu_m=z_{\alpha_m/2}=\Phi^{-1}\left(1-\alpha_m/2\right)$. Then the restated optimization problem is as follows:
\begin{align*}
	\text{minimize}&\;\; \frac{1}{M}\sum_{m=1}^M\frac{BEL[\mu_m,\nu_m,C;\sigma_m,\tau]}{\beta+BEL[\mu_m,\nu_m,C;\sigma_m,\tau]}\\
	\text{subject to}&\;\;\sum_{m=1}^M\log\left( BCP[\mu_m,\nu_m,C;\sigma_m,\tau]\right)\geq \log(1-q)
\end{align*}
\[
\text{where }
\begin{cases}
BEL[\mu_m,\nu_m,C;\sigma_m,\tau]=&2\nu_m\sigma_m\Phi(C_m);\\
BCP[\mu_m,\nu_m,C;\sigma_m,\tau]=&2\int_{-\infty}^{C_m}\left\{\Phi\left(\tfrac{\sigma_m}{\tau}z_m+\tfrac{\sqrt{\tau^2+\sigma^2_m}}{\tau}\nu_m\right)-\Phi\left(\tfrac{\sigma_m}{\tau}z_m\right)\right\}d\Phi(z_m).
\end{cases}
\]
This optimization problem is solvable because $BEL[\mu_m,\nu_m,C;\sigma_m,\tau]$ and $BCP[\mu_m,\nu_m,C;\sigma_m,\tau]$ are at least twice differentiable with respect to $\nu_m$ for any fixed $C$. Thus, we can set up a Lagrange equation, and solve it numerically by using the Newton-Raphson method. 

\subsection{Optimization Result}
\leavevmode
The result of the optimization is presented in this subsection. We exploit the setting for Figure \ref{GPQ} in which we ascertained the behaviors of the global measures. After the optimization, the BFWCRs are always equal to the global level, $1-q=0.9$, for any given $\tau$ and $C$. Therefore, we only present the resulting BRELs for different $\tau$'s, 2, 3, and 5, with respect to $C$ in Figure \ref{OPT}.
\begin{figure}[ht]
	\centering
	\includegraphics[scale=0.42]{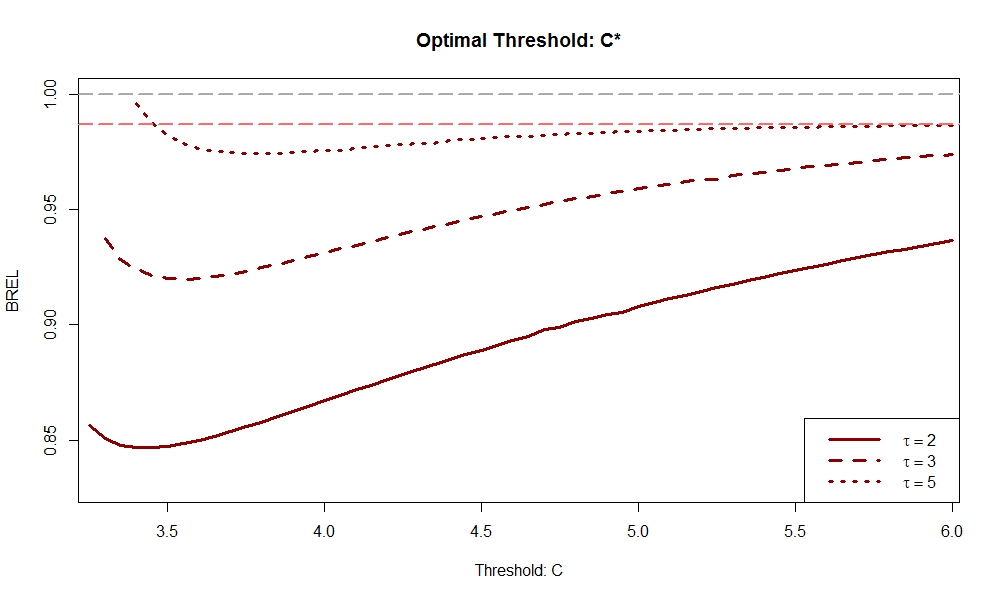}
	\caption[BREL after the Optimization]{$C^*$ obtained from Optimization} % The text in the square bracket is the caption for the list of figures while the text in the curly brackets is the figure caption
	\label{OPT} 
\end{figure}
In the plot, the trajectories show similar bath-tub shapes on the left hand side. The lowest points represent the smallest BRELs which determine the optimal $C^*$'s. For the different $\tau$'s, 2, 3, and 5, the minimum BRELs are 84.7\%, 92.0\%, and 97.4\%, with the corresponding optimal $C^*$'s being equal to 3.4, 3.5, and 3.8, respectively. For each optimal $C^*$, the corresponding BREL increases as $C$ increases or decreases. When $C$ increases, the increase of the BREL is evident due to the decrease of the BTR, i.e., the ratio of one-sided BIEs. When $C$ decreases, the increase of the BREL is due to the optimization procedure which pushes up the BFWCR to the global level $1-q=0.9$. This is also apparent from a single dimensional observation wherein we get a wider expected interval length by increasing its confidence level. 

It is worth mentioning that the BRELs converge to the light red dashed line instead of 1, the gray dashed line. This is because the reduction between the gray and light red dashed lines is solely achieved by the optimal allocation of the individual levels, $\alpha_m^*$'s, as shown in Appendix A.
That is, the amount of reduction can also be obtained from the optimization procedure with the classical $z$-based MIE. Therefore, the rest of the reductions between the minimum BRELs and the light red dashed line measures the actual thresholding effect of the BMIE Thres. Figure \ref{OLA} shows the optimal individual level allocation result with respect to $C$ for $\tau=3$. 
\begin{figure}[ht]
	\centering
	\includegraphics[scale=0.42]{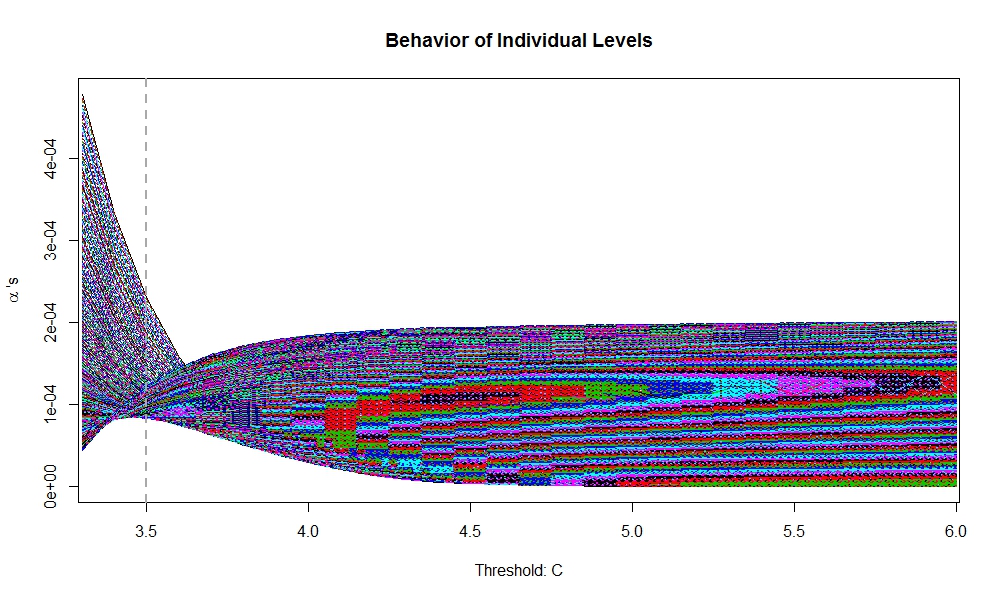}
	\caption[Optimal Level Allocation]{Behavior of Individual Levels, $\alpha_m$'s, for $\tau=3$} % The text in the square bracket is the caption for the list of figures while the text in the curly brackets is the figure caption
	\label{OLA} 
\end{figure}
Note that the optimal $C^*=3.5$ is the value where the behaviors of individual levels $\alpha_m$'s become reversed. In conclusion, the result suggests that meaningful amounts of the BREL reduction can be achieved when $\tau$ is relatively small, and the reduction is contributed by two parts: the reduction from the thresholding which can be optimized by $C^*$ and the reduction from the optimal individual level allocation, $\alpha^*_m$'s.

\section{Data Applications}
\subsection{In-season Batting Average Data: Known $\sigma_m$'s}
\leavevmode
In-season prediction of batting averages has been studied by several researchers, e.g., \citet{efron:1975} and \citet{brown:2008}, for a couple of reasons. First, while the beginning portion of baseball season data is used for the prediction, the batting average from the remainder of the season can be considered as a success probability, $p$, which reflects player's true batting capability for the season. Therefore, one can set up a binomial model with the $p$ along with another parameter, $n$, the total number of at-bats. Second, when the arcsine transformation is utilized to convert the binomial distribution into an approximate normal distribution, the standard deviation is derived solely from the number of at-bats that can be obtained from the data. Therefore, the individual problem can be set up with a normal distribution with a known standard deviation, so that we can utilize the form of BMIE Thres procedure in (\ref{eq33}), by simply replacing the hyper-parameters and individual levels with their estimated and optimized values, respectively.  

The previous studies on the in-season prediction of batting averages utilized the parametric empirical Bayes approach to predict the true batting ability by using the beginning portion of the data. For example, \citet{efron:1975,efron:1977} used the batting average of the first 45 at-bats in the 1970 MLB season and \citet{brown:2008} used the batting average of the first three months' data in the 2005 MLB season for the predictions. These previous studies concentrated on the point estimations, with the performance evaluated by the total squared error: $\sum(x_m-\theta_m)^2$, where $\theta_m$ is the batting average of the $m$th player and $x_m$ is the corresponding estimate for $\theta_m$. In this study, we focus on constructing an BMIE Thres and evaluating its performance by ascertaining the BREL and BFWCR. We utilize the 2005 MLB season data which is also used in \citet{brown:2008}, as it involves a larger number of players than the data in \citet{efron:1975}. 

\subsubsection{Problem Setup and Assumptions}
\leavevmode
Let $H_m$ and $N_m$ be the number of hits and at-bats for the $m$th player over the whole 2005 season. As we choose the first $j$ month(s) for the prediction period, $H^j_m$ and $N^j_m$ are defined as the number of hits and at-bats of the first $j$ month(s) period for $j=1,\; 2,\; \text{or},\; 3$. Once $j$ has been determined, the notation for $j=4$ is reserved for the number of hits and at-bats for the rest of the season. In this application, we exclude pitchers as well as batters who have fewer than 11 at-bats in either inside or outside of the prediction period. Therefore, each prediction period contains a different number of players, $M_j$. 
Given any index $j$ for the prediction period, we first set up the following binomial model:
\begin{align} \label{eq51}
H_m\sim Binom(N_m,p_m),\;\;\text{for}\;\;m=1,2,\ldots,M.
\end{align}
Then the arcsine transformation is utilized to obtain the following approximate normal model, and a normal prior distribution is added:
\begin{align} \label{eq52}
X_m=\arcsin\sqrt{\frac{H_m+1/4}{N_m+1/2}}\stackrel{\text{approx.}}{\sim}\mathcal{N}(\mu_m,\sigma^2_m)
\;\;\&\;\;\mu_m\sim\mathcal{N}(\eta,\tau^2)
\end{align}
where $\mu_m=\arcsin\sqrt{p_m}$ and $\sigma_m^2=\frac{1}{4N_m}$.
In addition, we assume that $\mu^j_m=\mu_m$ for $j=1,\; 2,\; 3,\; \text{and}\; 4$, meaning the true batting average of the $m$th player does not change throughout the season. Therefore, the observed batting average value from the rest of the season, $x^4_m$, is utilized for the true batting average $\mu_m=\mu_m^4$.

\subsubsection{Hyper-parameter Estimation}
\leavevmode
In order to implement the BMIE Thres, the hyper-parameters, $\eta$ and $\tau$, need to be estimated. We follow the ML-II approach in \citet{good:1983}. This estimation procedure considers the expression of the marginal density of $X^j_m$, $\mathcal{N}(\eta,(\sigma^j_m)^2+\tau^2)$, as a likelihood function of the hyper-parameters and seeks the values which maximize the likelihood. The resulting estimators can be obtained by solving the following two equations:
\begin{align} \label{eq53}
\hat{\eta}=\frac{\sum_{m=1}^{M_j} X^j_m/({\sigma^j_m}^2+\hat{\tau}^2)}{\sum_{m=1}^{M_j}1/({\sigma^j_m}^2+\hat{\tau}^2)}\;\;\&\;\;\sum_{m=1}^{M_j}\frac{(X^j_m-\hat{\eta})^2}{({\sigma^j_m}^2+\hat{\tau}^2)^2}=\sum_{m=1}^{M_j}\frac{1}{{\sigma^j_m}^2+\hat{\tau}^2}.
\end{align}
As the estimation is quite sensitive to the initial values of $\hat{\eta}$ and $\hat{\tau}$, we started from the following initial values:
\begin{align} \label{eq54}
\hat{\eta}_{\text{ini}}=\frac{1}{M_j}\sum_{m=1}^{M_j}X^j_m\;\;\&\;\;
\sum_{m=1}^{M_j}\frac{(X^j_m-\hat{\eta}_\text{ini})^2}{({\sigma^j_m}^2+\hat{\tau}_\text{ini}^2)^2}=\sum_{m=1}^{M_j}\frac{1}{{\sigma^j_m}^2+\hat{\tau}_\text{ini}^2}.
\end{align}

\subsubsection{Optimal $C^*$ and $\alpha^*_m$'s}
\leavevmode
The optimal $C^*$ and $\alpha^*_m$'s can be obtained through the optimization procedure above. Note that the procedure is performed based on the $\sigma_m$'s and the estimated hyper-parameters $\hat{\eta}$ and $\hat{\tau}$. In addition, the tuning parameter $\beta$ in the optimization is fixed to be 1000 as in Appendix A.
With the optimized values, the form of the BIE Thres for the $m$th player with the $j$th prediction period becomes:
\begin{align}\label{eq55}
\Gamma_m(X^j_m;\mu^j_m,\alpha_m^*)=\left(X^j_m-z_{\alpha^*_m/2} \sigma^j_mI\left\{X^j_m>\hat{\eta}^j-C^*\hat{\tau}^j\right\},X^j_m+z_{\alpha^*_m/2}\sigma^j_mI\left\{X^j_m<\hat{\eta}^j+C^*\hat{\tau}^j\right\}  \right).
\end{align}

\subsubsection{Performance of BMIE Thres on Batting Average Prediction}
\leavevmode
We implement the procedure for different prediction periods: April, April-May, and  April-June. In each case, the values of target parameter $\mu_m$'s are obtained from the data for the rest of the season, respectively: May-October, June-October, and July-October. The global level is set to be $1-q=0.9$. The results of this illustrative application are summarized in Table \ref{AR}.
\begin{table}[ht]
	\centering
	\caption{Result for Batting Averages Prediction}
	\small{
		\begin{tabular}{|c|c:c:c|}
			\hline
			Prediction Period	& April (j=1) & April$\sim$May (j=2) & April$\sim$June (j=3)    \\
			\hline
			$\hat{\eta}$  & 0.5425 & 0.5438 & 0.5468      \\
			$\hat{\tau}$  & 0.008 & 0.0123 & 0.0151     \\
			$C^*$       & 2.86 & 3.04  &   3.195  \\
			\hline
			BFWCR & 92.24\% (100\%) & 95.37\% (98.78\%) &    93.79\% (97.70\%)  \\
			BREL & 62.58\% & 76.46\%   & 82.95\%   \\
			BTR & 73.39\% & 42.68\%  & 28.05\%  \\
			\hline
			$M_j$    & 387 & 410  & 435      \\
			\hline
		\end{tabular}}
		\label{AR}
		%		\vspace{-10pt}
	\end{table}
	First, the BFWCRs are consistently higher than the global level $1-q$ in all of the prediction periods. Note that the corresponding BFWCRs of the classical $z$-based MIEs are in parentheses. The values are quite larger than the global level, implying the $z$-based MIEs are unnecessarily conservative. Now, note that the estimated $\tau$ increases as the prediction periods become wider. This prior information determines the optimal value $C^*$. As a result, we have the smallest $C^*$ when $j=1$, so the BTR becomes high and the corresponding BREL, 62.58\%, shows a considerable reduction. When $j=2$, we have a larger $C^*$, and this results in a higher BREL of 76.45\%. Lastly, we have the highest $C^*$ when $j=3$, so that the highest BREL, 82.95\%, is obtained. In conclusion, the BMIE Thres consistently shows reasonably higher-than-nominal-level BFWCRs; at the same time, it achieves meaningful reductions on the BRELs. However, the amount of reduction depends on the estimated value of $\tau$ which governs the prior information on the target parameter.

\subsection{Leukemia Gene Expression Data: Unknown $\sigma_m$'s}
\leavevmode
The leukemia data appeared in \citet{efron:2016} as a type of gene expression data from high-density oligonucleotide microarrays. It consists of $n=72$ patients with $n^1=45$ of ALL (acute lymphoblastic leukemia) group and $n^2=27$ of AML (acute myeloid leukemia) group, which has a worse prognosis. Efron provides small and large data sets which contain $M_1=3571$ and $M_2=7128$ genes, respectively. To eliminate response disparities among the $M$ microarrays as well as some outliers, the raw expression levels for the $m$th gene on the $k$th patient, $X_{mk}$, were transformed to a normal score value, 
\begin{align}\label{eq56}
	x_{mk}=\Phi^{-1}\left(\frac{rank(X_{mk})-0.5}{M}\right).
\end{align} 
Efron's investigation of the data was about a multiple testing procedure based on the local-FDR (see \citet{efron:2012}). However, our goal here is to construct the BMIE Thres for the mean difference between the ALL and AML groups and to compare this result with the classical $t$-based MIE. 
Note that we equally compare the two group means in this application. Therefore, zero value is utilized to evaluate the empirical coverage probability.
\begin{comment}
In general, interval estimation becomes effective when we equally compare two group means because \textit{zero} can be utilized as a criterion to evaluate the coverage of the interval estimation. That is, zero works as if it is the true parameter of the interval estimation in evaluating the empirical coverage probability.  
\end{comment}

\subsubsection{Problem Setup}
\leavevmode
The data has the form of a $M\times n$ matrix where $M$ is either 3571 or 7128 and $n=n^1+n^2$ is $72=45+27$. Next, $x_{mk}$ is the expression level for the $m$th gene of the $k$th patient, where $m=1,2,\ldots,M$, $k=1,2,\ldots,n^1$, for the AML group and $k=46,47,\ldots,n$, for the ALL group. This is the case of unknown standard deviations with one of the sample sizes less than 30, so that we could apply a plug-in procedure with the corresponding sample standard deviations. Thus, the basic individual procedure follows the two-sample $t$-based interval estimation under the equal variances assumption. (\citet{lehmann:2006})

\subsubsection{Hyper-Parameter Estimation and Optimal Threshold}
\leavevmode
For the hyper-parameter estimation, we can follow the same ML-II procedure as in the case of known $\sigma_m$'s in the previous application. However, it becomes hard to perform the optimization since the Newton-Raphson method does not work with quite large $M$s in this application. Therefore, we instead utilize a graphical search for the optimal $C^*$ as in Figure \ref{GS}.
	\begin{figure}[ht]
		\centering
		\includegraphics[scale=0.25]{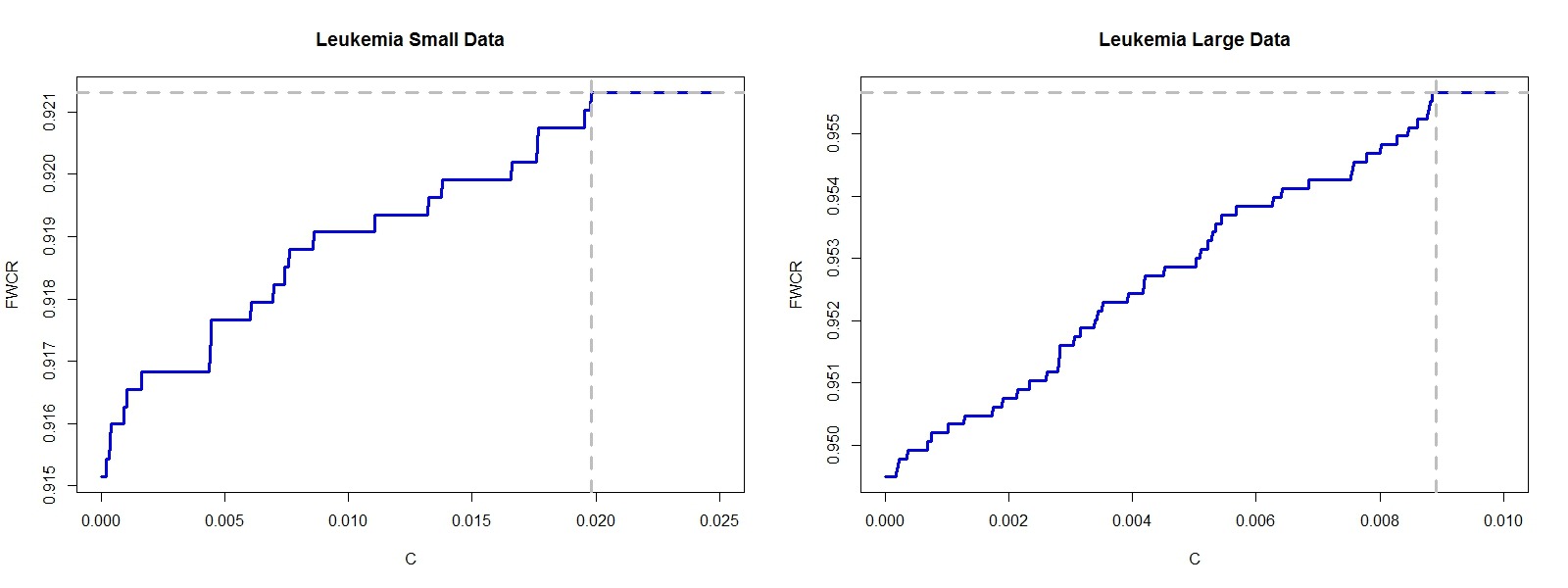}
		\caption[Graphical Search for $C^*$ based on BFWCRs]{Graphical Search for $C^*$ based on BFWCRs} % The text in the square bracket is the caption for the list of figures while the text in the curly brackets is the figure caption
		\label{GS} 
		\vspace{-10pt}
	\end{figure}
Note that given a global level, $1-q=0.9$, the BFWCRs of the classical $t$-based MIE for small and large leukemia data sets are 0.921 and 0.956, respectively. Since the BFWCRs should converge to the above values as $C$ increases, we choose the optimal $C^*$ at the points of intersections: $C^*=0.0198$ for the small data and $C^*=0.0098$ for the large data. These values are the smallest values which have the same BFWCRs as those of the classical MIEs. By the same token, the Sidak adjustment, $\alpha_S=1-(1-q)^{1/M}$, is applied for the individual levels since it is available without the optimization.
Then the form of the $m$th BIE Thres is as follows:
	\begin{align}\label{eq57}
		\Gamma_m(\bc{X}_m;,\mu_m\alpha_S)=&\left(\bar{X}^{1}_m-\bar{X}^{2}_m-t_{\alpha_S/2,n^1_m-n^2_m-2}S^p_m\sqrt{1/n^1_m+1/n^2_m}I\left\{\bar{X}_m>\hat{\eta}-C^*\hat{\tau}\right\},\right. \nonumber\\
		&\left.
		\bar{X}^{1}_m-\bar{X}^{2}_m+t_{\alpha_S/2,n^1_m-n^2_m-2}S^p_m\sqrt{1/n^1_m+1/n^2_m}I\left\{\bar{X}_m<\hat{\eta}+C^*\hat{\tau}\right\}  \right)
	\end{align}	
where $S^p_m=\sqrt{\frac{\sum_{k=1}^{n_m^1}(x_{mk}-\bar{x}^1_m)^2+\sum_{k=n_m^1+1}^{n_m}(x_{mk}-\bar{x}^2_m)^2}{n_m^1+n_m^2-2}}$ and $t_{\alpha_S/2,n^1_m+n^2_m-2}$ is the $1-\alpha_S/2$ quantile of a $t$-distribution with $n^1_m+n^2_m-2$ degrees of freedom.
Due to the use of the Sidak adjustment, we lose the opportunity to assign the optimal levels into the individual IEs. However, this will not cause problems in this particular illustration, as we will see in the next subsection.

\subsubsection{Performance of BMIE Thres on Leukemia Data}
\leavevmode
In this subsection, we construct the BMIE Thres under the global level $1-q=0.9$. The result is summarized in Table \ref{LD}.
\begin{table}[ht]
	\centering
	\caption{Result for Leukemia Data}
	\small{
		\begin{tabular}{|c|c:c|}
			\hline
			Leukemia 	& Small Data & Large Data \\
			\hline
			$\hat{\eta}$  & 0.0108 & 0.0014 \\
			$\hat{\tau}$  & 0.5336 & 0.1598 \\
			$C^*$            & 0.0198 & 0.0098 \\
			\hline
			BFWCR & 92.13\% & 95.57\%    \\
			BREL & 50.65\% & 50.46\%     \\
			BTR & 98.74\% & 98.81\%    \\
			\hline
			$M$    & 3571 & 7128       \\
			\hline
		\end{tabular}}
		\label{LD}
		%			\vspace{-10pt}
	\end{table}
By design, the BFWCRs of the BMIE Thres are the same as those of the classical $t$-based MIEs. However, the BRELs of the BMIE Thres are less than 51\% in both cases, implying strong performance in reducing the global expected content. These seemingly \textit{too good} results can be justified by the motivational sketch of the thresholding approach. 
	\begin{figure}[ht]
		\centering
		\includegraphics[scale=0.35]{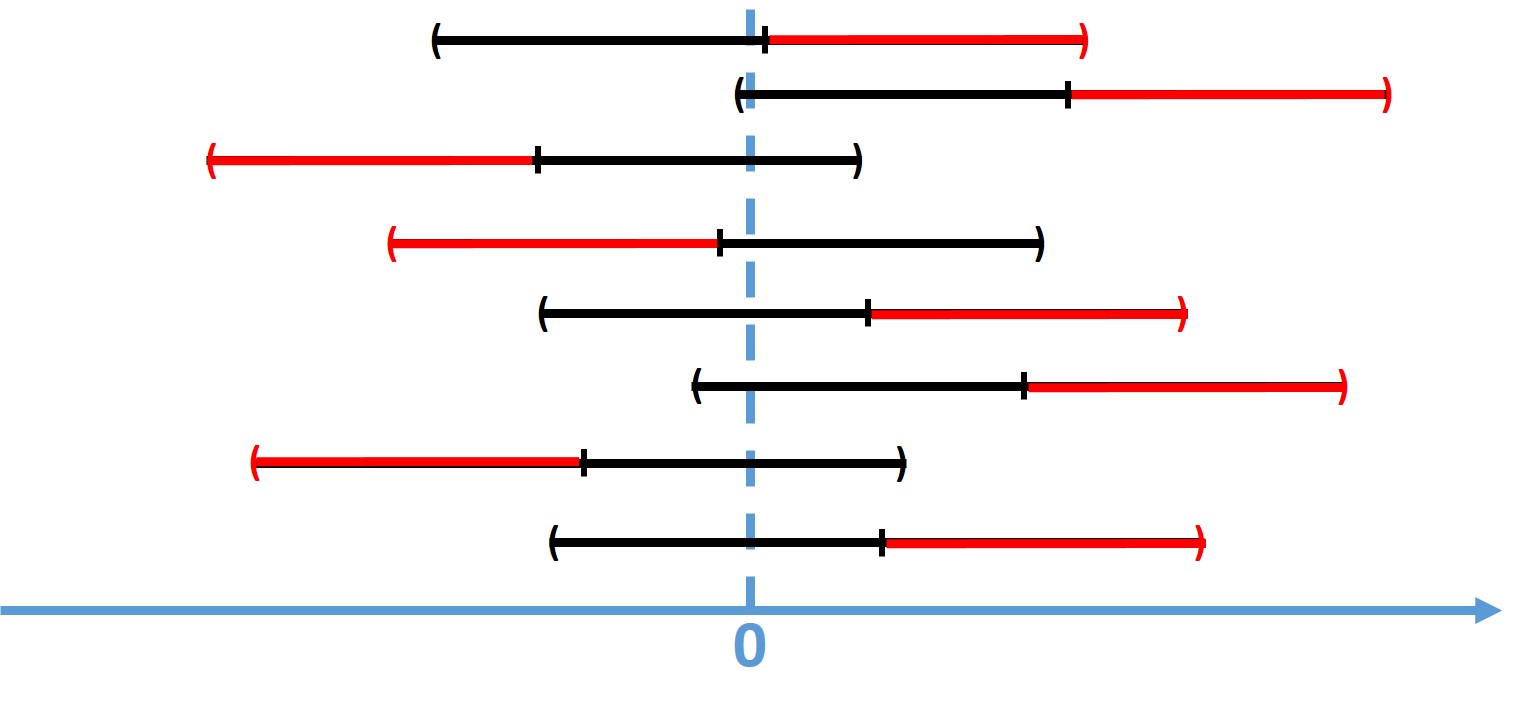}
		\caption[Bayes FWCR for Leukemia Data]{Comparing Two Population Means} % The text in the square bracket is the caption for the list of figures while the text in the curly brackets is the figure caption
		\label{CTP} 
		\vspace{-10pt}
	\end{figure}
As mentioned, the thresholding procedure removes the outer tails of BIEs. However, with regard to comparing two group means, the inner tails cover zeros in most cases as in Figure \ref{CTP}. In some sense, the thresholding scheme of the BMIE Thres is designed to operate well in this particular application of comparing two population means by discarding a number of redundant outer tails. Because of the significant reductions already achieved, we do not have to exploit the additional reduction from the formal optimization procedure. This can also be an advantage when we consider the potential applicability of the BMIE Thres to high-throughput data set.

\section{Simulation}
We ascertained the performance of BMIE Thres by investigating its theoretical global measures in subsection 3.4. However, the illustration was based on an ideal situation in which the suggested model perfectly represented the true underlying structure as follows:
\[
\bar{X}_m|\mu_m\sim \mathcal{N}\left(\mu_m,\sigma^2_m\right)\;\;\&\;\;\mu_m\sim \mathcal{N}^\Pi\left(\eta,\tau^2\right)=\mathcal{F}^*(\eta^*,\tau^*)\;\text{for}\;m=1,2,\ldots,M,
\]
where $\mathcal{N}^\Pi(\eta,\tau^2)$ was the prior distribution in our minds and $\mathcal{F}^*(\eta^*,\tau^*)$ was the true generating distribution, which the target parameter $\mu_m$'s followed. By assuming the two distributions were identical, we could effectively present the general behavior of the BMIE Thres as in Figure \ref{GPQ}. However, this assumption is unrealistic in the sense that it does not encompass the case of misspecified models. This is because no one can guarantee that the prior distribution is identical to the true generating distribution, i.e., it is always possible $\mathcal{N}^\Pi\ne \mathcal{F}^*$. Moreover, even though the distributions are identical, the parameters can still be distinct from one another, i.e., it is also possible $(\eta,\tau)\ne (\eta^*,\tau^*)$. Therefore, this simulation study is designed to emulate situations of prior misspecification. That is, we assume the true generating distribution, $\mathcal{F}^*$, to be normal, uniform, logistic, or exponential, with a fixed true mean, $\eta^*$, and standard deviation, $\tau^*$, in order to generate $\mu_m$'s. However, the prior distribution assigned is always the normal distribution with the hyper-parameters $\eta$ and $\tau$, and we set these to be different from the true parameters.   

In addition, since the proposed \textit{frequentist} procedure, BMIE Thres, depends on the prior distribution, it would be reasonable to ascertain its performance in relation to the performance of the Bayesian credible MIE or the classical $z$-based MIE. In involving these comparisons in the simulation, we will consider a total of five MIEs: $z$-based classical MIE, $\bc{\Gamma_0(X)}$; BMIE Thres with hyper-parameter, $\bc{\Gamma_1(X)}$; BMIE Thres with estimated hyper-parameters, $\bc{\Gamma_2(X)}$; Bayesian credible MIE with hyper-parameters, $\bc{\Gamma_3(X)}$; and Bayesian credible MIE with estimated hyper-parameters, $\bc{\Gamma_4(X)}$. 
To amplify, even though there is a true stochastic model generating the $\mu_m$'s, our main inferential interest are the values of the generated $\mu_m$'s, which are {\em the} parameters of interest, instead of the generating stochastic model. As such, the evaluation of the procedures are with respect to the true values of these parameters.
The $m$th individual IEs of the five MIEs are as follows: 
\begin{align*}
	\Gamma^0_m(\bc{X})&=\left(\bar{X}_m-z_{\alpha_s/2}\sigma_m,\bar{X}_m-z_{\alpha_s/2}\sigma_m\right);\\
	\Gamma^1_m(\bc{X})&=\left(\bar{X}_m-z_{\alpha_m/2}\sigma_mI\left\{\bar{X}_m>\eta-C^*\tau\right\},\bar{X}_m+z_{\alpha_m/2}\sigma_mI\left\{\bar{X}_m<\eta+C^*\tau\right\}  \right);\\
	\Gamma^2_m(\bc{X})&=\left(\bar{X}_m-z_{\alpha_m/2}\sigma_mI\left\{\bar{X}_m>\hat{\eta}-C^*\hat{\tau}\right\},\bar{X}_m+z_{\alpha_m/2}\sigma_mI\left\{\bar{X}_m<\hat{\eta}+C^*\hat{\tau}\right\}  \right);\\
	\Gamma^3_m(\bc{X})&=\left(\beta\bar{X}_m+(1-\beta)\eta-z_{\alpha_s/2}\sigma_m\sqrt{\beta},\beta\bar{X}_m+(1-\beta)\eta+z_{\alpha_s/2}\sigma_m\sqrt{\beta}\right);\\
	\Gamma^4_m(\bc{X})&=\left(\hat{\beta}\bar{X}_m+(1-\hat{\beta})\hat{\eta}-z_{\alpha_s/2}\sigma_m\sqrt{\hat{\beta}},\hat{\beta}\bar{X}_m+(1-\hat{\beta})\hat{\eta}+z_{\alpha_s/2}\sigma_m\sqrt{\hat{\beta}}\right)
\end{align*}
where $\beta=\frac{\tau^2}{\sigma^2+\tau^2}$, $\hat{\beta}=\frac{\hat{\tau}^2}{\sigma^2+\hat{\tau}^2}$, and $\alpha_s=1-(1-q)^{1/M}$, the Sidak adjustment. Note that the last MIE is also called the empirical Bayes MIE under the normal-normal model.  

The goal of the simulation is to perform simultaneous interval estimation on $M=1000$ normal location parameters, $\mu_m$'s, given the global level $1-q=0.9$. These $\mu_m$'s are generated from different true distributions -- normal, uniform, logistic, and exponential -- with the fixed true mean $\eta^*$ and standard deviation $\tau^*$. To construct $\Gamma_1$ and $\Gamma_3$, we directly utilize the prior mean $\eta$ and standard deviation $\tau$, and these values will deviate from the true mean and standard deviation. To construct $\Gamma_2$ and $\Gamma_4$, we utilize estimated prior mean $\hat{\eta}$ and standard deviation $\hat{\tau}$ obtained from the data. The estimated prior standard deviation is also used to determine the optimal threshold $C^*$ and $\alpha_m^*$'s. The simulation scheme is provided by the pseudo code in Algorithm \ref{ALGO}.

\begin{algorithm}
	\caption{Simulation Scheme}
	\begin{algorithmic}
		\STATE \#\# $M=1000$ and $\sigma_m$'s are from $unif(0.01,10)$
		\FOR{$i=1$ in 1:4}
		\FOR{$j=1$ in 1:3}
		\FOR{$k=1$ to $Nrep=1000$}
		\STATE Data$\leftarrow$ Generator($\eta^*[i]$,$\tau^*[j]$)
		\STATE $C^*$$\leftarrow$Optimizer(Data,$\hat{\tau}[j]$)
		\STATE MIE$\leftarrow$Constructor(Data,$C^*$,$\eta[i]$,$\tau[j]$,$\hat\eta[i]$,$\hat\tau[j]$)
		\STATE Out[k]$\leftarrow$Evaluator(MIE,$\eta^*[i]$,$\tau^*[j]$)
		\ENDFOR	
		\STATE Result$[[i]][[j]]\leftarrow$Summarizer(Out)
		\ENDFOR
		\ENDFOR
		\STATE Tabulator(Result)
	\end{algorithmic}
	\label{ALGO}
\end{algorithm}

We first consider the case in which the true generating distribution is normal,
$\mathcal{N}^\Pi=\mathcal{F}^*$, but the hyper-parameters deviate from the true mean and standard deviation, $(\eta,\tau)\ne (\eta^*,\tau^*)$. That is, the true mean and standard deviation are fixed to be 0 and 2, but $\eta$ and $\tau$ take the values (0, 2, 4, 6) and (1, 2, 3), respectively. When we compare the performances of five MIEs, the empirical global coverage probability is obtained by the ratio of the IEs which cover the true target parameters in MIEs, and the empirical global expected content is measured by the relative average expected length of MIEs, i.e., the average expected lengths of the MIEs in relation to the average expected length of the corresponding $z$-based MIE.

\begin{table}[ht]
	\centering
	\caption{Prior Misspecification when the True Distribution is Normal}
	\small{
		\begin{tabular}{|c|c|c:c:c:c|c:c:c:c|}
			\hline
			\multicolumn{2}{|c|}{Normal}    & \multicolumn{4}{c|}{Global Coverage Probability} & \multicolumn{4}{c|}{Global Expected Content}  \\
			\hline
			$1-q=0.9$ &\scriptsize{MIEs} & \scriptsize{$\eta^*=0$}           & \scriptsize{$\eta=2$}          & \scriptsize{$\eta=4$}          & \scriptsize{$\eta=6$}          & \scriptsize{$\eta^*=0$}           & \scriptsize{$\eta=2$}          & \scriptsize{$\eta=4$}          & \scriptsize{$\eta=6$}                \\
			\hline
			\multirow{4}{*}{$\tau=1$}
			&$\Gamma_0$&0.896&0.909&0.911&0.873&1.000&1.000&1.000&1.000\\   
			&$\Gamma_1$&0.904&0.902&0.810&0.082&0.898&0.910&0.895&0.861\\
			&$\Gamma_2$&0.904&0.903&0.912&0.882&0.898&0.893&0.902&0.918\\
			&$\Gamma_3$&0.000&0.000&0.000&0.000&0.181&0.181&0.181&0.181\\
			&$\Gamma_4$&0.878&0.879&0.870&0.878&0.328&0.327&0.327&0.327\\
			\hline
			\multirow{4}{*}{$\tau^*=2$} 
			&$\Gamma_0$&0.909&0.903&0.922&0.883&1.000&1.000&1.000&1.000\\
			&$\Gamma_1$&0.909&0.905&0.922&0.888&0.915&0.927&0.951&0.946\\
			&$\Gamma_2$&0.909&0.905&0.922&0.888&0.914&0.902&0.917&0.906\\
			&$\Gamma_3$&0.894&0.271&0.000&0.000&0.328&0.328&0.328&0.328\\
			&$\Gamma_4$&0.861&0.860&0.867&0.860&0.326&0.326&0.326&0.327\\
			\hline
			\multirow{4}{*}{$\tau=3$} 
			&$\Gamma_0$&0.887&0.897&0.907&0.885&1.000&1.000&1.000&1.000\\		  
			&$\Gamma_1$&0.887&0.895&0.907&0.885&0.909&0.925&0.941&0.940\\
			&$\Gamma_2$&0.887&0.895&0.907&0.885&0.914&0.879&0.910&0.902\\
			&$\Gamma_3$&0.981&0.983&0.832&0.160&0.446&0.446&0.446&0.446\\
			&$\Gamma_4$&0.866&0.883&0.888&0.865&0.326&0.327&0.327&0.327\\
			\hline      
		\end{tabular}}
		\label{sim_norm}
	\end{table}
Table \ref{sim_norm} shows the simulation results under the correctly specified prior up to its distributional level. When the prior mean ($\eta$) and prior standard deviation ($\tau$) are equal to the true values, $\eta^*=0$ and $\tau^*=2$, all the MIEs perform well, showing the satisfactory global coverage probability and the reductions in the global expected contents. In particular, the global expected contents of the Bayes credible MIEs, $\Gamma_3$ and $\Gamma_4$, are significantly smaller than those of the $z$-based MIE and the BMIE Thres. However, as $\eta$ deviates from the true mean, the coverage of $\Gamma_3$ rapidly decreases; still, $\Gamma_4$ performs well with the estimated hyper-parameters except for a slight degradation in the global coverage rate. When the prior standard deviation is specified to be less ($\tau=1$) than the true standard deviation, the global coverage probabilities are affected by the concentrated prior distribution, resulting in substantial degradation. In this situation, $\Gamma_1$ is also affected by the concentrated prior, showing a gradual degradation in the global coverage probability as $\eta$ deviates from the true mean. The global expected contents of MIEs are generally narrower than in the previous case, reflecting the concentrated prior information. Lastly, when the prior standard deviation is specified to be larger ($\tau=3$) than the true standard deviation, the global level requirements are generally well satisfied due to the effect of the diffused prior, except for $\Gamma_3$ which shows rapid decrease as $\eta$ deviates. 
To compensate, the global expected contents become larger in general. In terms of the global coverage probability, the MIEs with the estimated parameters, $\Gamma_2$ and $\Gamma_4$, show the robust result as expected. Among those, $\Gamma_4$ performs very well compared to $\Gamma_2$ in terms of the global expected content. As a result, when the prior distribution is close to the true underlying structure up to its distributional level, $\Gamma_4$ would be the best choice if one can endure a slight degradation of the global coverage probability; however, if the global level requirement needs to be strictly satisfied, then $\Gamma_2$ would be the choice as it shows a very robust result for any hyper-parameter combination.
	
	\begin{table}[ht]
		\centering
		\caption{Prior Misspecification when the True Distribution is Uniform}
		\small{
			\begin{tabular}{|c|c|c:c:c:c|c:c:c:c|}
				\hline
				\multicolumn{2}{|c|}{Uniform}    & \multicolumn{4}{c|}{Global Coverage Rate} & \multicolumn{4}{c|}{Global Content}  \\
				\hline
				$1-q=0.9$ &\scriptsize{MIEs} & \scriptsize{$\eta^*=0$}           & \scriptsize{$\eta=2$}          & \scriptsize{$\eta=4$}          & \scriptsize{$\eta=6$}          & \scriptsize{$\eta^*=0$}           & \scriptsize{$\eta=2$}          & \scriptsize{$\eta=4$}          & \scriptsize{$\eta=6$}                \\
				\hline
				\multirow{4}{*}{$\tau=1$}   
				&$\Gamma_0$&0.902&0.893&0.884&0.889&1.000&1.000&1.000&1.000\\
				&$\Gamma_1$&0.900&0.892&0.884&0.889&0.712&0.789&0.829&0.847\\
				&$\Gamma_2$&0.900&0.892&0.884&0.889&0.736&0.746&0.746&0.746\\
				&$\Gamma_3$&0.522&0.000&0.000&0.000&0.181&0.181&0.181&0.181\\
				&$\Gamma_4$&0.985&0.980&0.979&0.981&0.327&0.327&0.327&0.327\\		
				\hline
				\multirow{4}{*}{$\tau^*=2$} 
				&$\Gamma_1$&0.928&0.895&0.902&0.911&1.000&1.000&1.000&1.000\\		
				&$\Gamma_1$&0.920&0.895&0.902&0.910&0.718&0.793&0.832&0.850\\
				&$\Gamma_2$&0.920&0.895&0.902&0.910&0.747&0.746&0.746&0.737\\
				&$\Gamma_3$&0.986&0.929&0.001&0.000&0.328&0.328&0.328&0.328\\
				&$\Gamma_4$&0.984&0.987&0.981&0.985&0.327&0.327&0.327&0.327\\
				\hline
				\multirow{4}{*}{$\tau=3$}
				&$\Gamma_1$&0.889&0.890&0.909&0.904&1.000&1.000&1.000&1.000\\		 
				&$\Gamma_1$&0.889&0.888&0.905&0.896&0.717&0.797&0.829&0.850\\
				&$\Gamma_2$&0.889&0.888&0.905&0.896&0.746&0.755&0.747&0.746\\
				&$\Gamma_3$&0.987&0.981&0.931&0.522&0.446&0.446&0.446&0.446\\
				&$\Gamma_4$&0.985&0.987&0.992&0.986&0.327&0.327&0.327&0.327\\   
				\hline    
			\end{tabular}}
			\label{sim_unif}
		\end{table}		
From now on, the simulations reflect the case where the prior distribution itself deviates from the true distribution: $\mathcal{N}^\Pi\ne \mathcal{F}^*$. First, Table \ref{sim_unif} presents the simulation result when the true underlying distribution is a uniform distribution. While the classical $z$-based MIE always shows a consistent result since it has nothing do to with the prior distribution, the other four MIEs show better performances than the normal prior case. In particular, $\Gamma_1$ and $\Gamma_2$ provide greater reductions on the global contents compared to the previous normal true distribution case; still, the global level requirements are well satisfied, showing the global coverage probabilities are greater than $1-q=0.9$. Compared to this, $\Gamma_3$ and $\Gamma_4$ achieve larger reductions on the global expected content with the higher global coverage probabilities, implying the MIEs can reach the same reductions with less efforts. Still, $\Gamma_3$ are very sensitive to the location- or scale-wise deviations of the hyper-parameters. 
		
The shapes of the logistic and exponential distributions are quite far from the normal prior distribution. Thus, we would expect worse performances of the MIEs compared to the previous cases. For this reason, we reset the true standard deviation of the generating distribution to be $\tau=1$ to have better comparisons among the MIEs.  

		\begin{table}[ht]
			\centering
			\caption{Prior Misspecification when the True Distribution is Logistic}
			\small{
				\begin{tabular}{|c|c|c:c:c:c|c:c:c:c|}
					\hline
					\multicolumn{2}{|c|}{logistic}    & \multicolumn{4}{c|}{Global Coverage Rate} & \multicolumn{4}{c|}{Global Content}  \\
					\hline
					$1-q=0.9$ &\scriptsize{MIEs} & \scriptsize{$\eta^*=0$}           & \scriptsize{$\eta=2$}          & \scriptsize{$\eta=4$}          & \scriptsize{$\eta=6$}          & \scriptsize{$\eta^*=0$}           & \scriptsize{$\eta=2$}          & \scriptsize{$\eta=4$}          & \scriptsize{$\eta=6$}                \\
					\hline
					\multirow{4}{*}{$\tau=0.5$}   
					&$\Gamma_0$&0.897&0.920&0.895&0.894&1.000&1.000&1.000&1.000\\
					&$\Gamma_1$&0.877&0.637&0.000&0.000&0.792&0.779&0.505&0.504\\
					&$\Gamma_2$&0.896&0.917&0.895&0.902&0.863&0.840&0.905&0.851\\
					&$\Gamma_3$&0.000&0.000&0.000&0.000&0.095&0.095&0.095&0.095\\
					&$\Gamma_4$&0.267&0.244&0.246&0.231&0.179&0.179&0.179&0.179\\		
					\hline
					\multirow{4}{*}{$\tau^*=1$} 
					&$\Gamma_1$&0.892&0.904&0.905&0.910&1.000&1.000&1.000&1.000\\		
					&$\Gamma_1$&0.887&0.904&0.904&0.869&0.857&0.873&0.891&0.868\\
					&$\Gamma_2$&0.887&0.904&0.905&0.907&0.846&0.845&0.842&0.872\\
					&$\Gamma_3$&0.241&0.000&0.000&0.000&0.181&0.181&0.181&0.181\\
					&$\Gamma_4$&0.236&0.246&0.255&0.271&0.179&0.179&0.179&0.179\\
					\hline
					\multirow{4}{*}{$\tau=2$}
					&$\Gamma_1$&0.909&0.896&0.897&0.896&1.000&1.000&1.000&1.000\\		   
					&$\Gamma_1$&0.914&0.890&0.897&0.901&0.848&0.873&0.902&0.912\\
					&$\Gamma_2$&0.914&0.890&0.897&0.901&0.830&0.837&0.868&0.861\\
					&$\Gamma_3$&0.990&0.961&0.401&0.000&0.328&0.328&0.328&0.328\\
					&$\Gamma_4$&0.253&0.275&0.263&0.229&0.179&0.181&0.179&0.180\\   
					\hline    
				\end{tabular}}
				\label{sim_logis}
			\end{table}
			
Table \ref{sim_logis} provides the simulation results when the true underlying distribution is a logistic distribution. Note that the global coverage probabilities of Bayes Credible MIEs, $\Gamma_3$ and $\Gamma_4$, are lower than the nominal global level in any hyper-parameter combination except for the $\Gamma_3$ with the relatively matched prior mean ($\eta=0,2$) and diffused prior standard deviation ($\tau=2$). This is because the center value of the credible interval, the posterior mean, is off the target as it is derived based on the normal-normal model assumption. Note that the BMIE Thres with estimated hyper-parameters, $\Gamma_2$, still works well under this logistic true distribution. This implies that the hyper-parameter estimation procedure itself is still viable. Lastly, $\Gamma_1$ also shows quite reasonable results except for some cases with the concentrated prior. The results provide an example which reinforces the notion that the Bayesian approach can be totally off when prior distribution is misspecified.
			
	\begin{table}[ht]
		\centering
		\caption{Prior Misspecification when the True Distribution is Exponential}
		\small{
			\begin{tabular}{|c|c|c:c:c:c|c:c:c:c|}
				\hline
				\multicolumn{2}{|c|}{exponential}    & \multicolumn{4}{c|}{Global Coverage Rate} & \multicolumn{4}{c|}{Global Content}  \\
				\hline
				$1-q=0.9$ &\scriptsize{MIEs} & \scriptsize{$\eta^*=0$}           & \scriptsize{$\eta=2$}          & \scriptsize{$\eta=4$}          & \scriptsize{$\eta=6$}          & \scriptsize{$\eta^*=0$}           & \scriptsize{$\eta=2$}          & \scriptsize{$\eta=4$}          & \scriptsize{$\eta=6$}                \\
				\hline
				\multirow{4}{*}{$\tau=0.5$}
				&$\Gamma_0$&0.889&0.901&0.888&0.874&1.000&1.000&1.000&1.000\\   
				&$\Gamma_1$&0.675&0.877&0.883&0.000&0.792&0.779&0.741&0.504\\
				&$\Gamma_2$&0.896&0.900&0.888&0.872&0.916&0.919&0.922&0.922\\
				&$\Gamma_3$&0.000&0.000&0.000&0.000&0.095&0.095&0.095&0.095\\
				&$\Gamma_4$&0.010&0.015&0.011&0.013&0.177&0.178&0.178&0.178\\
				\hline
				\multirow{4}{*}{$\tau^*=1$} 
				&$\Gamma_0$&0.896&0.917&0.896&0.896&1.000&1.000&1.000&1.000\\
				&$\Gamma_1$&0.896&0.917&0.896&0.896&0.928&0.900&0.795&0.769\\
				&$\Gamma_2$&0.895&0.917&0.896&0.895&0.920&0.922&0.921&0.921\\
				&$\Gamma_3$&0.003&0.341&0.000&0.000&0.181&0.181&0.181&0.181\\
				&$\Gamma_4$&0.013&0.008&0.022&0.022&0.178&0.178&0.178&0.178\\
				\hline
				\multirow{4}{*}{$\tau=2$} 
				&$\Gamma_0$&0.889&0.897&0.873&0.904&1.000&1.000&1.000&1.000\\		  
				&$\Gamma_1$&0.889&0.892&0.873&0.904&0.921&0.850&0.820&0.770\\
				&$\Gamma_2$&0.889&0.890&0.873&0.904&0.905&0.914&0.913&0.918\\
				&$\Gamma_3$&0.916&0.978&0.864&0.001&0.328&0.328&0.328&0.328\\
				&$\Gamma_4$&0.018&0.006&0.015&0.014&0.178&0.178&0.177&0.177\\
				\hline     
			\end{tabular}}
			\label{sim_exp}
		\end{table}
Lastly, Table \ref{sim_exp} shows the simulation result when the true underlying distribution is exponential. It shows quite unexpected behaviors; that is, the highest global coverage probabilities are achieved not with $\eta=0$ but with $\eta=2$, which is a slightly deviated value from the truth. This would be due to the right-skewness of the true generating distribution. Still, the general behaviors are similar to those of the logistic case. While $\Gamma_3$ and $\Gamma_4$ cannot maintain their coverage probabilities in general, $\Gamma_1$ and $\Gamma_2$ show a consistent performance except for the $\Gamma_1$ with concentrated prior distribution.  
				
In conclusion, $\Gamma_3$ cannot be beaten when the prior is correctly specified in terms of the distribution as well as the corresponding hyper-parameters. It provides significant reductions on the global expected contents, while maintaining the global coverage probability at least the global level. However, when the hyper-parameters deviate from the truth, it becomes hard to satisfy the global level requirement; still, $\Gamma_4$ is robust for these misspecified hyper-parameters. Now, when the prior distribution itself deviates from the true generating distribution, then both $\Gamma_3$ and $\Gamma_4$ suffer from the low global coverage probability. When all these happen, $\Gamma_1$ and $\Gamma_2$ generally satisfy the global level requirement except for some extreme cases, providing reasonable reductions on the global expected contents. Therefore, we can conclude that BMIE Thres possesses robustness against the prior misspecification.
 
\section{Discussion and Future Works}
When a confidence interval is introduced, one faces the temptation to remove one side of the interval to reach a shorter expected length which implies a better precision of the interval estimator. Our procedure was motivated by this intuitive idea and it is designed to formalize the removal process of a side of the interval. However, the procedure relies on additional prior information. Because of this, the individual estimator is no longer within the class of classical confidence intervals but within the wider class of BIE which can be defined and evaluated by the integrals of the performance measures with respect to the prior distribution as in (\ref{eq17}). We call our procedure a Bayes MIE Thres as the integration process resembles the derivation of the Bayes risk in a general statistical decision problem.

While we utilize prior information, it is important to realize that the prior distribution, which we assume for the modeling, can be different from the true underlying distribution, which actually generates the target parameters. The simulation setting in the previous section takes this potential disagreement into consideration, generating the location parameters, $\mu_m$'s, from the distinct true distributions. Coupled with the comparison to the Bayes Credible MIEs, the results of the simulation study send a warning signal regarding the use of unjustified prior information, i.e., the issue of prior misspecification. 
The results suggest that evaluating a model only with a  correctly specified prior distribution is insufficient if the model involves prior information. That is, the model can fail drastically when the prior is misspecified. It is also evident that this agrees with one of the major criticisms of the Bayesian procedure. Consequently, we cannot simply enjoy having prior information, and the models should be used with discretion when they rely on prior information.

Under the independence assumption, the resulting BMIE Thres can be considered as a hyper-rectangular region estimator in the $M$ dimensional parameter space. In general, this would create a larger volume compared to an ellipsoidal region estimator, even after we discard one side of some intervals. However, in many situations, researchers also want to inspect individual IEs which cannot be tracked from an ellipsoidal region estimator. In this context, the independent setting for the BMIE Thres still has its own advantages despite being restrictive. In fact, the real data applications provide meaningful performances, i.e., smaller global contents, compared to the traditional $z$ and $t$-based MIE.  

One limitation of this study is the use of the FWCR for the global coverage probability. While major studies of MTPs have been reorganized based on FDR for the global type-I error rate, there exists a need for a refined global coverage probability to replace the FWCR. This is not a simple task, as the exact dual procedure of the FDR is impossible to obtain due to the non-existence of alternative hypothesis information in the MIE problem. If a refined global coverage probability is established, we can reformulate the BMIE Thres by replacing the global risk function in the optimization procedure.

We utilized a plug-in procedure in the data application to construct the BMIE Thres under unknown standard deviations in section 5.2. However, we could also have built the procedure upwards from the full prior structure for the $m$th individual IE is as follows:  
\begin{align}\label{eq71}
	\overline{X}_m|\mu_m,\lambda_m\sim& N\left(\mu_m,(n_m\lambda_m)^{-1}\right)\;\;\&\;\;&(n_m-1)S_m^2|\lambda_m\sim& g_1=Gamma(\tfrac{n_m-1}{2},\tfrac{\lambda_m}{2}) \nonumber \\
	\mu_m|\lambda_m\sim& N\left(\eta,(\kappa_m\lambda_m)^{-1} \right)\;\;\;\;\;\&\;\;&\lambda_m\sim& g_2=Gamma(a,b)
\end{align}
where the normal distributions consist of location and precision parameters and the gamma distributions consist of shape and rate parameters. Given this structure, the form of the $m$th individual BIE Thres has the following form:
\begin{align}
	\Gamma_m(X_m;\mu_m,\alpha_m)=&\left(\overline{X}_m-t^{n_m-1}_{\alpha_m/2}\tfrac{S_m}{\sqrt{n_m}}I\left\{\overline{X}_m-\eta>-C\sqrt{\tfrac{b}{\kappa_m}}\tfrac{\Gamma(a-\tfrac{1}{2})}{\Gamma(a)}\right\},\right. \nonumber  \\  &\hspace{6pt}\left.\overline{X}_m+t^{n_m-1}_{\alpha_m/2}\tfrac{S_m}{\sqrt{n_m}}I\left\{\overline{X}_m-\eta<C\sqrt{\tfrac{b}{\kappa_m}}\tfrac{\Gamma(a-\tfrac{1}{2})}{\Gamma(a)}  \right\} \right)
\end{align}
where $S_m^2=\tfrac{1}{n_m-1}\sum_{i=1}^{n_m}(X_i-\overline{X}_m)^2$ and $t^{n_m-1}_{\alpha_m/2}$ is the $1-\alpha_m/2$ quantile of a $t$-distribution with $n_m-1$ degrees of freedom.
The corresponding BCP and BEL can be derived, but no closed forms exist. The hardest part is the estimation of the hyper parameters: $\eta$, $a$, $b$, and $\kappa_m$'s. If this issue can be resolved, then we will be able to apply the resulting BMIE Thres to the actual data for unknown standard deviations. 

Another interesting direction for future study of BMIE Thres is its nonparametric extensions. Note that under the normal-normal model, BMIE Thres shares the same setting as the parametric empirical Bayes. Therefore, a reasonable starting point for the extension would be the nonparametric empirical Bayes setting, and we could derive a threshold based on that nonparametric structure. Lastly, a thresholding approach which does not rely on a prior distribution can have practical utility. In this situation, the thresholding decision can be done through the information from a domain study such as economics, engineering, or biology.

\section*{Acknowlegments}
This manuscript is a part of the first author's dissertation work at the University of South Carolina. 
The authors thank Professor James Lynch, Dr. Shiwen Shen, and graduate students Jeff Thompson, Lu Wang, and Lili Tong for helpful discussions during research seminars.
In addition, we thank Professors Karl Gregory, Dewei Wang and George Androulakis for their helpful comments as dissertation committee members of the first author. We also acknowledge NIH grant P30GM103336-01A1 and the Center for Colon Cancer Research at the University of South Carolina for partially supporting this research. Lastly, the first author appreciate Professor Alexander Goldenshluger for providing the post-doctoral research opportunity at the University of Haifa in Israel. 

\newpage
\bibliographystyle{plainnat}
\bibliography{foo}

\newpage
\appendix
\addtocontents{toc}{\setcounter{tocdepth}{1}}

\section{Optimization: Decision Theoretic Approach}
This appendix introduces an optimization based on a decision theoretic framework through the two global risk functions which reflect the global coverage probability and global expected content. The optimization procedure searches for the best MIE by allocating the optimal individual levels so that the global expected content can be minimized while maintaining the global coverage probability at a global level of at least $1-q$. The general idea is adopted from \citet{pena:2011} which aims at the best MTP by allocating the optimal individual sizes under the family-wise error rate or false discovery rate. The allocation procedure is called \textit{size investing strategy}. Similarly, the goal of the procedure in this appendix is to establish a \textit{confidence level investing strategy}.

\subsection{Individual Loss}
\leavevmode
Let $\Theta=(-\infty,\infty)$ be a parameter space and $\mathcal{A}=\{(a_1,a_2):-\infty<a_1<a_2<\infty\}$ be an action space. 
Now, we define a pair of loss functions $L_0$ an $L_1$ as follows:
\begin{align}
L_0(\theta,w)=\nu(w)\;\&\;\;L_1(\theta,w)=I\{\theta\notin w\}=I\{\theta\notin(a_1,a_2)\}
\end{align}
where $w=(a_1,a_2)$ and $\nu$ is the content measure. In this study, $\nu(w)=|a_2-a_1|$ because $\theta$ is a location parameter. Note that the first loss function penalizes intervals with wide lengths and the second loss function penalizes intervals which do not contain true parameters. Given a small positive number $\alpha\in(0,1)$, we set up an optimization problem as follows: for every $\theta\in\Theta$,
\begin{align}
\text{minimize }E_\theta[L_0(\theta,W(X))] \text{ subject to }E_\theta[L_1(\theta,W(X))]\leq \alpha.    
\end{align}
Note that this setting represents the usual pursuit of the tightest IE with the coverage probability maintained at least at a nominal level of $1-\alpha$.

One issue with the loss functions is that, whereas the range of $L_1(\theta,w)$ is always between zero and one, the range of $L_0(\theta,w)$ is positively unbounded. This lack of balance between two loss functions can cause an unstable result in the optimization. To handle this issue, we adjust the $L_0(\theta,w)$ by adopting a function, $h_\beta(x)=\tfrac{x}{\beta+x}$, where $\beta$ is a positive constant as follows:
\begin{align}
L_0^\beta(\theta,w)=h_\beta(L_0(\theta,w))=\tfrac{\nu(w)}{\beta+\nu(w)}.
\end{align}
Note that the adjusted loss function $L_0^\beta(\theta,w)$ ranges from zero to one on the positive domain. Moreover, ths function, $h_\beta(x)$, is smooth with the $n$th derivative, $\tfrac{d^n}{dx^n}h_\beta(x)=\tfrac{(-1)^{n+1}n!\beta}{(\beta+x)^{n+1}}$, which are utilized to the optimization procedure. A similar loss function approach was introduced in \citet{casella:1991} for a single-dimensional case. We extend our idea to a multi-dimensional case in the next section.

\subsection{Global Loss and Risk}
\leavevmode
Let $M$ be a positive integer. Then $\bc{\Theta}=(-\infty,\infty)^M$ is a paramter space with an element $\boldsymbol{\theta}=(\theta_1,\theta_2,\ldots,\theta_M)^T$, and $\boldsymbol{\mathcal{A}}=\{\times_{m=1}^M(a^m_1,a^m_2):-\infty<a^m_1<a^m_2<\infty,\;m=1,$$\ldots,M\}$ is an action space with an element $(\boldsymbol{a_1,a_2})=[(a^1_1,a^1_2), (a^2_1,a^2_2),$ $ \ldots,  (a^M_1,a^M_2)]^T$. Now we define two global loss functions as follows:
\begin{align}
	\boldsymbol{L}^\beta_0(\boldsymbol{\theta},\boldsymbol{w})=\frac{1}{M}\sum_{m=1}^ML_0^\beta(\theta_m,w_m)\;\&\;\;
	\boldsymbol{L}_1(\boldsymbol{\theta},\boldsymbol{w})=I\left\{\left(\sum_{m=1}^ML_1(\theta_m,w_m)\right)\geq 1\right\} 
\end{align}
where $\boldsymbol{w}=(\boldsymbol{a_1,a_2})$. Observe that the interpretations of the individual loss functions are still maintained in the global loss functions. That is, the first global loss function penalizes multiple intervals with wide contents, and the second global loss function penalizes multiple intervals which do not cover at least one true parameter.

Now, let $\mathcal{D}$ be a class of nonrandomized multiple decision functions which consist of $\boldsymbol{\delta:\mathcal{X}\longrightarrow\mathcal{A}}$ where $\boldsymbol{\delta(X;\alpha)}=[\delta_1(X;\alpha)=(LB_1(X;\alpha),UB_1(X;\alpha)),\ldots,\delta_M(X;\alpha)=$\\$(LB_M(X;\alpha),UB_M(X;\alpha))]^{T}$. In order to obtain the risk functions, we need to take expectations on the global loss functions. However, $\boldsymbol{L}^\beta_0(\boldsymbol{\theta},\boldsymbol{w})$ has a non-linear form with respect to the random variable, so we cannot take that expectation directly. However, this issue can be circumvented by using a linear interpolation, i.e., the risk function can be well approximated. The resulting risk functions are as follows: 
\begin{align}
\boldsymbol{R_0^\beta(\theta,\delta)}=\boldsymbol{E_\theta\left[L^\beta_0(\theta,\delta(X;\alpha))\right]}\;\&\;\boldsymbol{R_1(\theta,\delta)}=\boldsymbol{E_\theta\left[L_1(\theta,\delta(X;\alpha))\right].}
\end{align}
Notice that the first global risk function is the adjusted global expected content. In addition, the second global risk function is related to the familywise coverage rate (FWCR) which is defined as the probability that an MIE covers all of the true parameters, so that 
$\boldsymbol{R_1(\theta,\delta)=1-FWCR(\theta,\delta)}.$

\subsection{Optimization Procedure}
\leavevmode
Given a small positive number $q\in(0,1)$, we set up an optimization problem as follows:
\begin{align}
\text{minimize}\;\; \boldsymbol{R^\beta_0(\theta,\delta)}\;\;\text{subject to}\;\;\boldsymbol{R_1(\theta,\delta)}\leq q.
\end{align}
Note that the restriction implies the global coverate probability, $\bc{FWCR(\theta,\delta)}$, is maintained to be at least a global level, $1-q$.
In this subsection, we apply this procedure to an MIE for $M$ normal location parameters with known variances. By the Sufficiency Principle, we simplify the setting as follows:
\[
\bar{X}_m\sim N(\mu_m,\sigma_m^2)\;\;\text{for}\;\; m=1,\ldots,M,
\]
where the random variables are independent throughout the index $m$ and the variance of $\bar{X}_m$ is set to be $\sigma^2_m$ without loss of generality. The form of the $m$th individual IE, $\Gamma_m$, has the usual form as follows:
\[
\Gamma_m(X_m;\alpha_m)=[LB_m(X_m;\alpha_m),UB_m(X_m;\alpha_m)]=[\bar{X}_m-z_{\alpha_m/2}\sigma_m,\bar{X}_m+z_{\alpha_m/2}\sigma_m]
\]
where $z_{\alpha}=\Phi^{-1}(1-\alpha)$. As mentioned earlier, the content measure $\nu$ is the Lebesgue measure, $v(w)=|a_2-a_1|$, because the mean is a location parameter. Given this setting, we evaluate the two risk functions as follows:
\begin{align}
	\boldsymbol{R^\beta_0(\mu,\Gamma)}\;=\;&\bc{E_{\mu}}\left[\frac{1}{M}\sum_{m=1}^Mh_\beta(\nu(LB_m(X_m;\alpha_m),UB_m(X_m;\alpha_m)))\right]\nonumber\\
	\approx\;&\frac{1}{M}\sum_{m=1}^Mh_\beta\left(E_{\mu_m}[UB_m(X_m;\alpha_m)-LB_m(X_m;\alpha_m)]\right)\nonumber\\
	=\;&\frac{1}{M}\sum_{m=1}^Mh_\beta(2z_{\alpha_m/2}\sigma_m)\nonumber\\
	=\;&\frac{1}{M}\sum_{m=1}^M\frac{2z_{\alpha_m/2}\sigma_m}{\beta+2z_{\alpha_m/2}\sigma_m}\\
	\boldsymbol{R_1(\mu,\Gamma)}\;=\;&1-\boldsymbol{P_\mu}\left[\left(\sum_{m=1}^ML_1(\mu_m,w_m)\right)=0\right]\nonumber\\
	=\;&1-\boldsymbol{P_\mu}\left[ \bigcap_{m=1}^M \{\mu_m\in (LB_m(X_m;\alpha_m),UB_m(X_m;\alpha_m))\}\right]\nonumber\\
	=\;&1-\prod_{m=1}^MP_{\mu_m}\left[ \mu_m\in (LB_m(X_m;\alpha_m),UB_m(X_m;\alpha_m))  \right]\nonumber\\
	=\;&1-\prod_{m=1}^M (1-\alpha_m)
\end{align}
Note the approximation in the second equality can be achieved by a piece-wise linear interpolation. We reparametrize $\nu_m=z_{\alpha_m/2}$ for a numerical stability. Then the initial optimization problem can be restated as follows:   
\begin{align}
\text{minimize}\;\; \frac{1}{M}\sum_{m=1}^M\frac{2\nu_m\sigma_m}{\beta+2\nu_m\sigma_m}\;\;
\text{subject to}\;\;\sum_{m=1}^M\log(2\Phi(\nu_m)-1)\geq \log(1-q).
\end{align}
This problem can be numerically solved by using the Newton-Raphson method after setting up a Lagrange equation. 

\subsection{Optimization Result}
\leavevmode
The essence of the optimization procedure is the allocation of optimal levels to the individual IEs, called the confidence level investing strategy.
This process allows us to reach the smallest global expected content, while maintaining the global level requirement. In this problem, the tuning parameter, $\beta$, determines the shape of the function $h_\beta$ which controls the allocation strategy.
\begin{figure}[ht]
	\centering
	\includegraphics[scale=0.45]{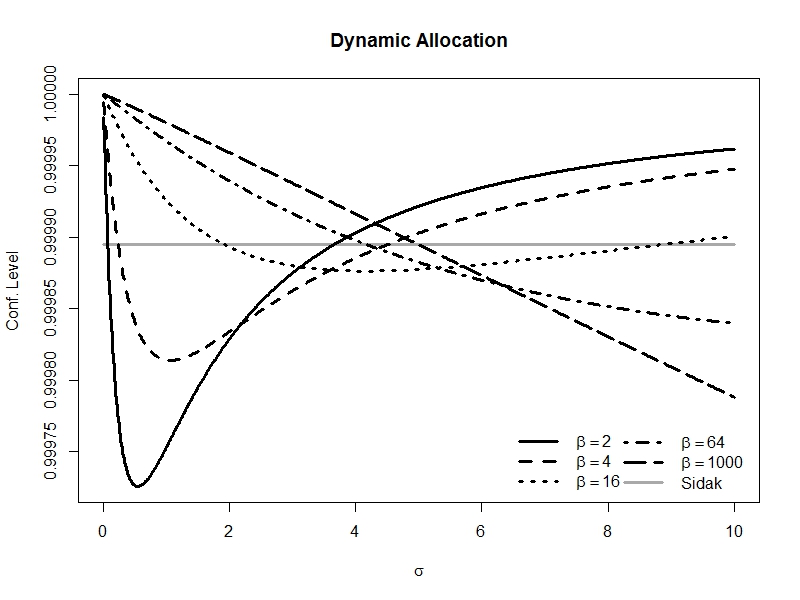}
	\caption[Individual Level Allocation: Normal Mean]{Individual Level Allocation: Normal Mean} % The text in the square bracket is the caption for the list of figures while the text in the curly brackets is the figure caption
	\label{CLA} 
\end{figure}
Figure \ref{CLA} illustrates the dynamics due to $\beta$ in the optimal level allocations of the MIE for $M=1000$ normal location parameters, $\mu_m$'s.
Notice that an equi-spaced sequence from 0.01 to 10 is assigned for $\sigma_m$'s. Therefore, the points on the graphs represent the allocated individual levels with respect to the $\sigma_m$'s on the horizontal line. First, the gray horizontal line represents the allocation based on the Sidak adjustment, i.e., the constant individual levels, $(1-q)^{1/M}$. Compared to this, the black curves represent the allocation results achieved through the optimization procedure. The shape of the curves varies with respect to the value of $\beta$. When $\beta$ is small, the trend shows a highly nonlinear shape, assigning large individual levels to the IEs with very small and large $\sigma$'s. However, as $\beta$ becomes larger, the form of the curves reaches an almost linear line with the negative slope. Eventually, when $\beta$ is greater than 1000, the shape of the curve remains invariant with respect to the value of the $\beta$.
In this application, the global expected content defined as a relative expected length (REL), i.e., the average expected length ratio of the MIE with optimal levels to the MIE with the Sidak adjustment. 

Table \ref{RLLT} summarizes the global RELs with respect to $\beta$.
\begin{table}[ht]
	\centering
	\caption[Relative Expected Length: Normal Mean]{Relative Expected Length with respect to $\beta$}
	\begin{tabular}{|c|cccccc|}
		\hline
		$\beta$& 1 & 2 & 8 & 32 & 1000 & Sidak Adj. \\
		\hline
		REL    & 1.0317 & 1.0177 & 0.9967 & 0.9888 & 0.9874 & 1 \\
		\hline
	\end{tabular}
	\label{RLLT}
\end{table}
Note that when $\beta$ is small, the REL is greater than 1, implying the performance is no better than the MIE with the Sidak adjustment, the constant allocation. However, the REL becomes smaller and converges to 0.9874 as $\beta$ increases.  
This overall reduction can be explained through the allocation result. When $\beta=1000$, the optimization procedure assigns smaller individual levels to the IEs with larger $\sigma_m$'s to counterbalance the sizes of $\sigma_m$'s with smaller $z_{\alpha_m/2}$'s. To compensate for these \textit{investments}, the procedure matches larger $z_{\alpha_m/2}$'s to the smaller $\sigma_m$'s by assigning larger levels to the corresponding IEs. These processes are performed simultaneously to minimize the REL, maintaining the FWCR at least the global level of $1-q=0.9$. However, the resulting 1.26\% reduction in the relative expected length is quite limited amount. This limitation particularly motivates the thresholding approach in this study. 
\begin{comment}
Lastly, we provide the result of this matching process with respect to the different $\sigma_m$'s in Figure \ref{RLL}.
\begin{figure}[ht]
	\centering
	\includegraphics[scale=0.45]{figure/dynamic_length_2}
	\caption[Relative Expected Length: Normal Mean]{Relative Expected Length with respect to $\sigma_m$'s.} % The text in the square bracket is the caption for the list of figures while the text in the curly brackets is the figure caption
	\label{RLL} 
\end{figure}
The figure compares the individual relative length with the Sidak's adjusted levels and those with the optimal levels.
\end{comment}

\section{Proofs of Propositions and Lemma}
\propa*
\begin{proof}
	We can first manipulate the original length as follows.
	\begin{align*}
	\nu(\Gamma_m(\bc{X}_m;\mu_m,\alpha_m))=&z_{\alpha_m/2}\sigma_m\left[I\left\{\bar{X}_m<\eta+C\tau\right\}+I\left\{\bar{X}_m>\eta-C\tau\right\}\right]\\
	=&z_{\alpha_m/2}\sigma_m\left[1+I\left\{\eta-C\tau<\bar{X}_m<\eta+C\tau\right\} \right]
	\end{align*}
	where $\nu$ is the measure of the content. It is the Leabesgue measure in this case of location parameters. To derive the Bayes length, we take the expectations as follows: 
	\begin{align*}
	&\int_{\Theta_m}\int_{\mathcal{X}_m}\nu(\Gamma_m(x_m;\mu_m,\alpha_m))dP_m(x_m)d\Pi(\mu_m)\\
	=&E_{\mu_m}\left[E_{\bar{X}_m|\mu_m}[\nu(\Gamma_m(\bc{X}_m;\mu_m,\alpha_m))]\right]\\
	=&z_{\alpha_m/2}\sigma_m\left(1+E_{\mu_m}P_{Z_m}\left[\tfrac{\eta-C\tau-\mu_m}{\sigma_m}<Z_m<\tfrac{\eta+C\tau-\mu_m}{\sigma_m}\right] \right)\\
	=&z_{\alpha_m/2}\sigma_m\left(1-E_{\mu_m}\left[\Phi\left(\tfrac{\mu_m-\eta-C\tau}{\sigma_m}\right)-\Phi\left(\tfrac{\mu_m-\eta+C\tau}{\sigma_m}\right)\right]\right)
	\\=&z_{\alpha_m/2}\sigma_m\left(1-E_{Z'_m}\left[\Phi\left(\tfrac{Z'_m-C}{\sigma_m/\tau}\right)-\Phi\left(\tfrac{Z'_m+C}{\sigma_m/\tau}\right)\right]\right)\\
	=&z_{\alpha_m/2}\sigma_m\left(1-\Phi\left(\tfrac{-C\tau}{\sqrt{\sigma_m^2+
			\tau^2}}\right)+\Phi\left(\tfrac{C\tau}{\sqrt{\sigma_m^2+\tau^2}}\right) \right) 
	=2z_{\alpha_m/2}\sigma_m\Phi\left(\tfrac{C\tau}{\sqrt{\sigma_m^2+
			\tau^2}}\right)
	\end{align*}
	Once the form is derived, it is easy to observe, as $C$ goes to infinity, this Bayes length approaches the expected length of the classical $z$-based IE, $2z_{\alpha_m/2}\sigma_m$.
\end{proof}

\propb*
\begin{proof}
	For the derivation, it is better to use the posterior and marginal distributions:
	\[\mu_m|\bar{X}_m\sim N\left(\tfrac{\tau^2}{\tau^2+\sigma_m^2}\bar{X}_m+\tfrac{\sigma_m^2}{\tau^2+\sigma_m^2}\eta,\tfrac{\tau^2\sigma_m^2}{\tau^2+\sigma_m^2}\right)\;\;\&\;\;\bar{X}_m\sim N\left(\eta,\sigma_m^2+\tau^2\right)\]
	Then, the Bayes coverage probability becomes:
	\begin{align*}
	&\int_{\Theta_m} P_{\mu_m}[\mu_m\in\Gamma_m(\bc{X}_m;\mu_m,\alpha_m)]d\Pi(\mu_m)\\
	=&E_{\bar{X}_m}E_{\mu_m|\bar{X}_m}I\left\{LB_m\leq\mu_m\leq UB_m\right\}\\
	=&E_{\bar{X}_m}P_{\mu_m|\bar{X}_m}\left[\mu_m<\bar{X}_m+z_{\alpha_m/2}\sigma_mI\left\{\bar{X}_m-\eta<C\tau\right\}\right]\\
	-&E_{\bar{X}_m}P_{\mu_m|\bar{X}_m}\left[\mu_m<\bar{X}_m-z_{\alpha_m/2}\sigma_mI\left\{\bar{X}_m-\eta>-C\tau\right\}\right]\\
	=&E_{Z'_m}\Phi\left(\tfrac{\sigma_m}{\tau}Z'_m+\tfrac{\sqrt{\tau^2_m+\sigma^2_m}}{\tau}z_{\alpha_m/2}I\left\{Z'_m<\tfrac{C\tau}{\sqrt{\tau^2+\sigma_m^2}}\right\}  \right)\\
	-&E_{Z'_m}\Phi\left(\tfrac{\sigma_m}{\tau}Z'_m-\tfrac{\sqrt{\tau^2_m+\sigma^2_m}}{\tau}z_{\alpha_m/2}I\left\{Z'_m>\tfrac{-C\tau}{\sqrt{\tau^2+\sigma_m^2}}\right\}  \right)\\
	=&\int_{-C_m}^{C_m}\left\{\Phi\left(\tfrac{\sigma_m}{\tau}z'_m+\tfrac{\sqrt{\tau^2_m+\sigma^2_m}}{\tau}z_{\alpha_m/2}\right)-\Phi\left(\tfrac{\sigma_m}{\tau}z'_m-\tfrac{\sqrt{\tau^2_m+\sigma^2_m}}{\tau}z_{\alpha_m/2}\right)\right\}\phi(z'_m)dz'_m\\
	+&\int_{C_m}^\infty\left\{\Phi\left(\tfrac{\sigma_m}{\tau}z'_m\right)-\Phi\left(\tfrac{\sigma_m}{\tau}z'_m-\tfrac{\sqrt{\tau^2_m+\sigma^2_m}}{\tau}z_{\alpha_m/2}\right)\right\}\phi(z'_m)dz'_m\\
	+&\int_{-\infty}^{-C_m}\left\{\Phi\left(\tfrac{\sigma_m}{\tau}z'_m+\tfrac{\sqrt{\tau^2_m+\sigma^2_m}}{\tau}z_{\alpha_m/2}\right)-\Phi\left(\tfrac{\sigma_m}{\tau}z'_m\right)\right\}\phi(z'_m)dz'_m\\
	=&2\int_{-\infty}^{C_m}\left\{\Phi\left(\tfrac{\sigma_m}{\tau}z'_m+\tfrac{\sqrt{\tau^2_m+\sigma^2_m}}{\tau}z_{\alpha_m/2}\right)-\Phi\left(\tfrac{\sigma_m}{\tau}z'_m\right)\right\}\phi(z'_m)dz'_m
	\end{align*}
	where $C_m=\tfrac{C\tau}{\sqrt{\tau^2+\sigma_m^2}}$.
	Although it has no closed form, it is a smoothly increasing function with respect to $C$ and approaches the coverage probability of the classical $z$-based IE, $1-\alpha_m$. That is,
	\begin{align*}
	\lim_{C\to\infty}T_m(\alpha_m,C)=&2\int_{-\infty}^\infty\left\{\Phi\left(\tfrac{y+\tfrac{\sqrt{\tau^2+\sigma_m^2}}{\sigma_m}z_{\alpha_m/2}}{\tau/\sigma_m}\right)-\Phi\left(\tfrac{y-0}{\tau/\sigma_m}   \right)\right\}\phi(y)dy\\
	=&2\left\{\Phi\left(\Phi^{-1}\left(\alpha_m/2\right)\right)-\tfrac{1}{2}\right\}=1-\alpha_m
	\end{align*}\end{proof}

\propc*
\begin{proof}
	Note that $\bar{X}_m$ marginally follows $\mathcal{N}(\eta,\sigma_m^2+\tau^2)$. Then,
	\begin{align*}
	&E_{\bar{X}_m}\left[\frac{1}{M}\sum_{i=1}^M\left[I\{\bar{X}_m>\eta+C\tau\}+I\{\bar{X}_m<\eta-C\tau\} \right]  \right]\\
	=&\frac{1}{M}\sum_{i=1}^M\left[P[\bar{X}_m > \eta+C\tau]+P[ \bar{X}_m < \eta-C\tau ]\right]\\
	=&\frac{1}{M}\sum_{i=1}^M\left[P\left[\frac{\bar{X}_m-\eta}{\sqrt{\sigma_m^2+\tau^2}} > \frac{C\tau}{\sqrt{\sigma_m^2+\tau^2}}\right]+P\left[ \frac{\bar{X}_m-\eta}{\sqrt{\sigma_m^2+\tau^2}} < \frac{-C\tau}{\sqrt{\sigma_m^2+\tau^2}} \right]\right]\\
	=&\frac{1}{M}\sum_{i=1}^M\left[1-\Phi\left(\frac{C\tau}{\sqrt{\sigma_m^2+\tau^2}}\right)+\Phi\left( \frac{-C\tau}{\sqrt{\sigma_m^2+\tau^2}}  \right)  \right]\\
	=&\frac{2}{M}\sum_{i=1}^M\Phi\left( \frac{-C\tau}{\sqrt{\sigma_m^2+\tau^2}}\right)=\frac{2}{M}\sum_{i=1}^M\Phi(-C_m)
	\end{align*}
	where $C_m=C\tau/\sqrt{\sigma_m^2+\tau^2}$.
\end{proof}

\coro*
\begin{proof}
	From Proposition \ref{prop1} and Definition \ref{defin3}, we can obtain the BAELs of BMIE Thres and $z$-based MIE, respectively. Then $BREL[\bc{\mu},\bc{\alpha},C;\bc{\sigma},\tau]$ is the ratio of these two BAELs. In addition, the $BFWCR[\bc{\mu},\bc{\alpha},C;\bc{\sigma},\tau]$ can be obtained from Proposition \ref{prop2} and Definition \ref{defin2} by multiplying $M$ individual BIEs. Lastly, $BTR[\bc{\mu},\bc{\alpha},C;\bc{\sigma},\tau]$ is the immediate result from Proposition \ref{prop3}.
\end{proof}

\end{document}